\documentclass[journal]{IEEEtran}
\usepackage{cite}
\usepackage{amsmath,amssymb,amsfonts}
\usepackage{amsthm}
\usepackage{graphicx}
\usepackage{xcolor}
\usepackage{algorithm}
\usepackage{algpseudocode}
\usepackage{bbm}
\usepackage{cuted}
\usepackage{mathtools}
\usepackage[hidelinks]{hyperref}
\usepackage{mdframed}
\usepackage{booktabs}

\theoremstyle{definition}
\newtheorem{theorem}{Theorem}

\newtheorem{proposition}[theorem]{Proposition}

\begin{document}
	\bstctlcite{IEEEexample:BSTcontrol}
	
	\title{\fontsize{20 pt}{\baselineskip}\selectfont Multi-User MIMO with Rotatable Antennas and IRS: Joint Antenna Boresight and IRS Orientation Design} 
	
	\author{Guoying~Zhang,
		Qingqing Wu,
		Ziyuan Zheng,
		Qiaoyan Peng,
		Ailing Zheng,
		Yanze Zhu,
		Ying Gao,
		Wen Chen
		\thanks{Guoying Zhang, Qingqing Wu, Ziyuan Zheng, Ailing Zheng, Yanze Zhu, Ying Gao and Wen Chen are with the Department of Electronic Engineering, Shanghai Jiao Tong University, Shanghai 200240, China (e-mail: \{gy\_zhang; qingqingwu; zhengziyuan2024; ailing.zheng; yanzezhu; Ying-gao; wenchen\}@sjtu.edu.cn). Qiaoyan Peng is with the State Key Laboratory of Internet of Things for Smart City, University of Macau, Macao 999078, China (e-mail: qiaoyan.peng@connect.um.edu.mo).}
	}

\maketitle

\begin{abstract}
	
	In this paper, we investigate an intelligent reflecting surface (IRS)-assisted multi-user system, where the base station (BS) employs rotatable antennas (RAs) and the IRS can adjust the panel orientation.
	To alleviate the severe multiplicative path loss of the cascaded channel, the IRS is deployed near the BS, while the user-BS and user-IRS links remain in the far field.
	We formulate a sum-rate maximization problem by jointly optimizing the receive beamforming, IRS phase shifts, BS antenna boresights, and IRS panel orientation.
	To tackle the resulting highly coupled and non-convex problem, we first study a  single-user case to reveal the structure of the dual-rotation gain, which is shown to be multiplicatively separable in the far field but coupled in the near field.
	For the general multi-user case, we develop an alternating optimization algorithm, where the receive beamforming is updated in closed form, the IRS phase shifts are optimized by an FP-assisted Riemannian conjugate gradient method, and the BS antenna boresights and IRS panel orientation are updated via projected gradient methods.
	Simulation results demonstrate the significant sum-rate gains achieved by the proposed coordinated rotation design over fixed-orientation and single-rotation benchmark schemes, and provide useful insights into near-field dual-rotation design.
	
\end{abstract}

\begin{IEEEkeywords}
	Intelligent reflecting surface (IRS), rotatable antenna (RA), near-field propagation, sum-rate maximization.
\end{IEEEkeywords}

\section{Introduction}
\label{sec:intro}

\IEEEPARstart{F}{uture} wireless networks are expected to support massive connectivity, high spectral efficiency, and reliable multi-user access under dense deployment scenarios~\cite{saad2020vision}.
Multi-user multiple-input multiple-output (MIMO) is a key technique for exploiting spatial-domain resources and improving system capacity~\cite{gesbert2007mimo_paradigm, larsson2014massive_mimo}.
In uplink transmission, users are usually located at different positions, and their signals arrive at the base station (BS) from distinct spatial directions, making directional array gain and inter-user spatial separability essential for reliable multi-user reception~\cite{tse2005fundamentals}.
However, conventional MIMO systems typically employ fixed antenna arrays, where the antenna positions and boresight directions are predetermined once deployed.
Although receive combining can adapt the signal processing weights in the digital or analog domain, it cannot directly reconfigure the element-level directional gain patterns.
This limits the ability of conventional MIMO systems to exploit the angular variations of uplink channels, thereby motivating spatially reconfigurable antenna architectures at the BS.

Among existing spatially reconfigurable architectures, fluid antenna systems (FASs) realize channel reconfiguration by switching among multiple ports or fluid states over a compact aperture~\cite{wong2021fluid,new2025fas_tutorial}. 
Different from FASs, movable antennas (MAs) explicitly adjust antenna positions to exploit the position-domain variation of wireless channels~\cite{zhu2025tutorial_comst,ma2024mimo}.
Specifically, MAs can move within prescribed local regions at the transmitter or receiver, so that their positions can be optimized to reshape the phase superposition of multipath components, thereby improving the received signal power, enhancing spatial multiplexing, and suppressing interference. 
Six-dimensional MAs (6DMAs) further extend the MA concept by jointly adjusting the three-dimensional positions and three-dimensional rotations of antennas or antenna surfaces~\cite{shao2025tutorial,shao20256d}. 
However, the enhanced flexibility of MAs, FASs, and 6DMAs is generally accompanied by increased hardware complexity, control overhead, and channel-acquisition burden.
By contrast, rotatable antennas (RAs) offer a simpler BS-side reconfiguration architecture, where the antenna phase centers remain fixed and only the boresight directions are adjusted~\cite{peng2025ra_spectrum}.

By adjusting antenna orientations, RAs can reconfigure the direction dependent antenna gains without changing the array geometry, thereby improving the effective array response with relatively low implementation overhead~\cite{ zheng2026rotatable_antenna_tutorial,zheng2025ra_modeling,zhang2026joint_ra_irs_lwc}. 
This feature is particularly useful for multi-user uplink transmission, where users may arrive at the BS from different AoAs and the receive gain of each RA can be adapted to match favorable signal directions. 
As a result, RAs can enhance useful signal components, improve the effective multi-user channel conditions, and mitigate interference from undesired directions.
Recent studies have optimized RA orientations for ISAC and UAV communications, demonstrating the effectiveness of orientation-domain adaptation for channel reconfiguration~\cite{zhou2025ra_isac,zhang2026ra_uav_mmwave,chen2025ra_uav_isac}.
Nevertheless, RAs can enhance the existing user-BS channel and thus provide limited performance improvement when the direct channel is blocked or severely attenuated.

To overcome this limitation, intelligent reflecting surfaces (IRSs) can be introduced to establish a cascaded user-IRS-BS channel~\cite{wu2024intelligent}.
An IRS consists of passive reflecting elements that impose controllable phase shifts on incident signals~\cite{wu2020Towards, wu2021tutorial}.
The combination of RAs and IRSs provides complementary control over the direct and cascaded channels, since BS boresight steering changes the BS-side directional response, while IRS phase shifts control the passive reflection over the cascaded user-IRS-BS channel.
Joint active and passive beamforming has been widely studied
to improve wireless channel quality~\cite{wu2019irs_joint}.
In addition, IRS-assisted systems with MAs have been studied~\cite{gao2025integrating,wu2025integrating}.
Movable intelligent surfaces (MISs) have also been proposed to reconfigure wireless channels by adjusting the relative positions of metasurface layers~\cite{zheng2025movable}.
However, existing MA-IRS and MIS designs mainly exploit position-domain reconfiguration and phase-shift optimization, while IRS panel orientation is not jointly considered.
The IRS panel orientation determines the direction-dependent reflection gain with respect to the incident and reflected directions. 
When the surface normal is poorly aligned with these directions, the cascaded channel suffers from additional reflection loss\cite{ozdogan2020physics}. 

To mitigate this direction-dependent loss, rotatable IRSs are proposed to adjust the panel orientation, thereby better aligning the surface normal with the incident and reflected directions~\cite{cheng2022ris_rotation,peng2025rotatable_irs}.
Recent studies have optimized IRS orientation for reflected-link gain enhancement and wireless coverage improvement, including joint IRS placement-and-orientation design for coverage enhancement~\cite{zeng2021ris_coverage}.
Cooperative rotatable IRSs have also been studied for joint beamforming and orientation optimization~\cite{peng2025cooperative_irs}. 
Orientation-aware IRS designs have also been investigated for mobile and UAV-assisted deployments, where the panel orientation needs to adapt to time-varying incident and reflected directions~\cite{jiang2026uav_ris}. 
In addition, the joint optimization of IRS rotation and 6DMA configuration under statistical channel state information was studied in~\cite{zhou2025rotatable}.
These studies demonstrate the benefits of IRS orientation control in improving the reflected link. 
For IRS-assisted multi-user MIMO uplink communications, the joint coordination of IRS orientation, IRS phase shifts, and BS antenna boresight steering remains insufficiently explored, despite its potential for more flexible cascaded-channel reconfiguration.

To reduce the severe path loss of the cascaded user-IRS-BS channel, IRSs are often deployed near the BS. 
Under such deployment, the user-BS and user-IRS links usually remain in the far field and can be approximated by plane-wave propagation~\cite{bjornson2020nearfield}, whereas
the short IRS-BS separation may place the IRS-BS link in the near field and spherical-wave propagation should be considered~\cite{cui2022xlarge_nearfar,wang2024tutorial_xlmimo}.
This coordinated rotation design brings several challenges.
Specifically, the near-field IRS-BS link couples IRS panel orientation and BS antenna boresights.
In addition, the formulated sum-rate maximization problem is highly non-convex, while efficient joint optimization methods for these variables are still lacking. 
Therefore, it is necessary to develop a joint design framework for IRS-assisted multi-user MIMO uplink systems with BS boresight steering and IRS panel rotation under the mixed-field channel model.

In this paper, we propose a coordinated dual-rotation framework for IRS-assisted MIMO uplink communications.
The main contributions are summarized as follows:

\begin{itemize}
	\item 
	We establish a coordinated dual-rotation system model for IRS-assisted multi-user MIMO uplink communications, where the BS is equipped with RAs and the IRS panel orientation is adjustable.
	Then, we formulate an uplink sum-rate maximization problem by jointly optimizing the receive beamforming, IRS phase shifts, BS antenna boresights, and IRS panel orientation, subject to unit-modulus and mechanical steering constraints.
	We study the single-user case and show that, while the dual-rotation gain is multiplicatively separable in the far field, it becomes coupled in the near field through a coherent-combining factor bounded in $[1/N,1]$. We further characterize the aperture-to-distance-ratio-dependent alignment and reflection-gain behavior.
	
	\item 
	For the general multi-user case, we develop an alternating optimization (AO) algorithm to handle the coupled variables.
	The receive beamforming is updated in closed form via the minimum mean-square error (MMSE) combiner, and the IRS phase shifts are optimized by applying fractional programming (FP) to reformulate the sum-rate objective into a quadratic form solved by Riemannian conjugate gradient over the unit-modulus manifold.
	The BS antenna boresights and IRS panel orientation are updated by projected gradient ascent (PGA) and the Barzilai-Borwein variant (BB-PGA), respectively.
	
	\item	
	Simulation results verify the convergence of the proposed algorithm.
	They also show that the proposed scheme outperforms the fixed-orientation and single-rotation benchmarks under different transmit powers, BS antenna numbers, user numbers, and IRS-BS distances.
	The decomposition results further indicate that the gain of coordinated dual rotation is mainly attributed to the improvement in the average reflected-channel power.
	Moreover, the near-field tradeoff results show that a larger normalized aperture-to-distance ratio reduces the optimized mean alignment while increasing the reflection-gain variation.

\end{itemize}

The remainder of this paper is organized as follows.
Section~\ref{sec:system} presents the system model and problem formulation.
Section~\ref{sec:nearfield} analyzes the single-user case to characterize rotation-induced near-field effects.
Section~\ref{sec:multi_user} studies the multi-user sum-rate maximization problem and the proposed AO algorithm.
Section~\ref{sec:results} presents numerical results, and Section~\ref{sec:conclusion} concludes the paper.

Boldface lower case and upper case letters denote vectors and matrices, respectively. 
$\mathbb{C}^{M \times N}$ and $\mathbb{R}^{M \times N}$ denote the sets of $M \times N$ complex and real matrices, respectively.
$\|\mathbf{x}\|$ denotes the Euclidean norm of vector $\mathbf{x}$. 
$\|\cdot\|$, $\Re\{\cdot\}$, $(\cdot)^*$, $(\cdot)^T$, and $(\cdot)^H$ denote the Euclidean norm, real part, conjugate, transpose, and Hermitian transpose, respectively.
For conformable vectors or matrices, $\odot$ denotes the Hadamard product.
$\mathrm{diag}(\cdot)$ constructs a diagonal matrix from a vector or extracts the diagonal entries of a matrix.
$\nabla_{\mathbf{x}} f(\mathbf{x})$ denotes the gradient of $f$ with respect to $\mathbf{x}$.
$\mathcal{CN}(\boldsymbol{\mu},\boldsymbol{\Sigma})$ denotes a complex Gaussian distribution.
\begin{figure}[t]
	\centering
	\includegraphics[width=\linewidth]{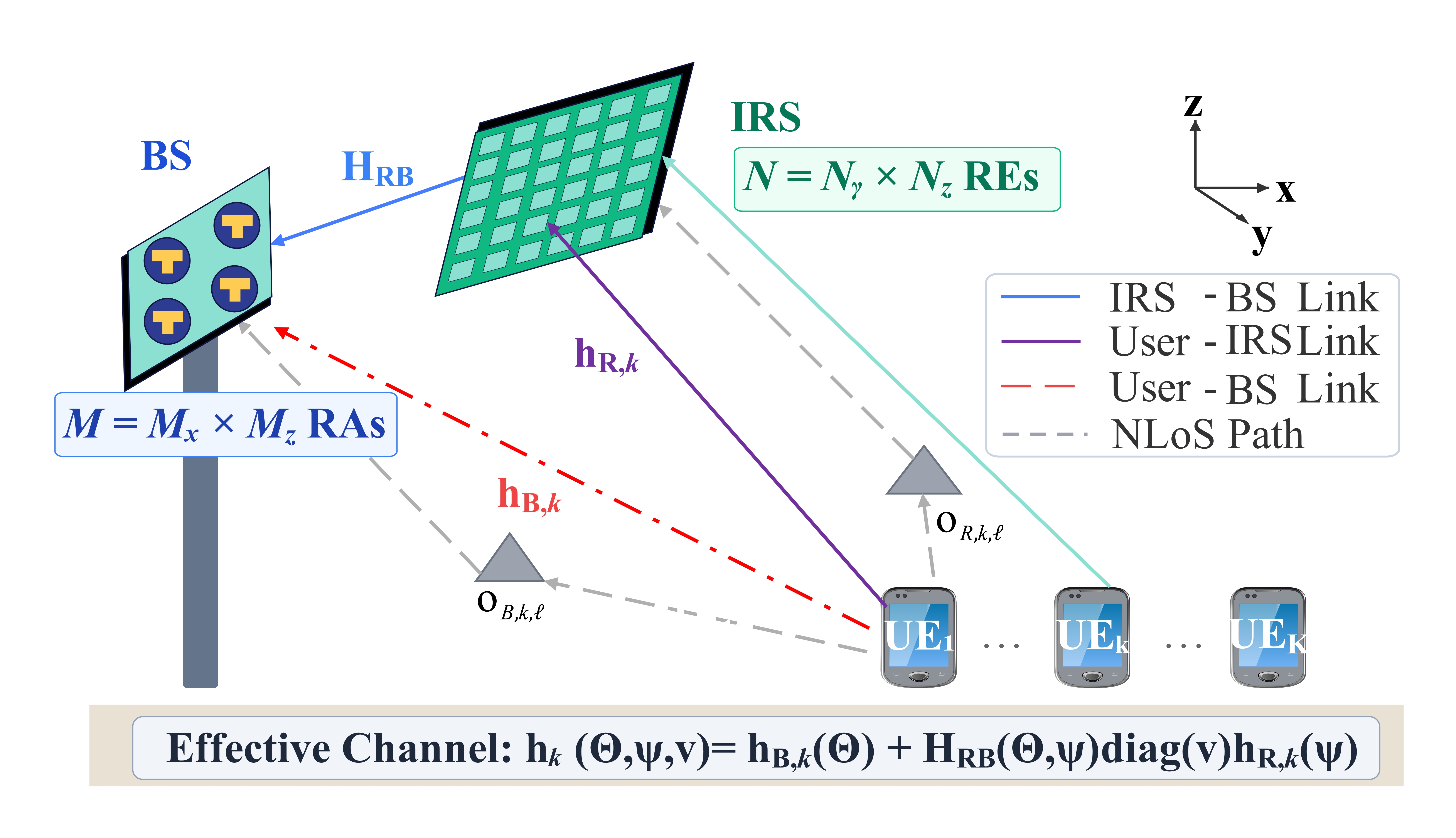}
	\caption{System model of the considered IRS-assisted multi-user uplink system.}
	\label{fig:dual_system_model}
\end{figure}

\section{System Model and Problem Formulation}
\label{sec:system}

\subsection{System Model}
\label{subsec:system_desc}

As illustrated in Fig.~\ref{fig:dual_system_model}, we consider an uplink multi-user system, where $K$ single-antenna users transmit to the BS equipped with $M=M_xM_z$ RAs, assisted by an IRS with $N=N_yN_z$ elements.
The IRS is mounted on a mechanical platform such that the panel orientation can be adjusted.
We establish a global Cartesian coordinate system, where the uniform planar array (UPA) in the BS lies on the $x$-$z$ plane with the array normal along the $+y$ axis, and elevation angles are measured from the $+y$ axis.
The position of user $k$ is denoted by $\mathbf{u}_k\in\mathbb{R}^3$.
The sets $\mathcal K \triangleq \{1,\ldots,K\}$, $\mathcal M \triangleq \{1,\ldots,M\}$, and $\mathcal N \triangleq \{1,\ldots,N\}$ collect the user, BS antenna, and IRS-element indices, respectively.

The array center is located at $\mathbf{b}_0=[0,0,z_{\mathrm{B}}]^{T}$, and the phase center of RA~$m$ is given by $\mathbf{b}_m=\mathbf{b}_0+\bar{\mathbf{b}}_m$ for all $m\in\mathcal{M}$.
Each RA adjusts the boresight direction through elevation angle $\theta_m^{\mathrm{e}}$ and azimuth angle $\theta_m^{\mathrm{a}}\in[0,2\pi)$.
Let $\boldsymbol{\theta}_m=(\theta_m^{\mathrm{e}},\theta_m^{\mathrm{a}})^{T}\in\mathbb{R}^{2}$ denote the deflection angles of RA~$m$, and define the BS antenna boresight matrix as
$\boldsymbol{\Theta}=[\boldsymbol{\theta}_1,\ldots,\boldsymbol{\theta}_M]\in\mathbb{R}^{2\times M}$.
The unit-norm boresight vector of RA~$m$ is given by
\setlength\abovedisplayskip{4pt}
\setlength\belowdisplayskip{4pt}
\begin{equation}
	\label{eq:boresight_vector}
	\mathbf{f}_m(\boldsymbol{\theta}_m)
	=
	\left[
	\sin\theta_m^{\mathrm{e}}\cos\theta_m^{\mathrm{a}},\,
	\cos\theta_m^{\mathrm{e}},\,
	\sin\theta_m^{\mathrm{e}}\sin\theta_m^{\mathrm{a}}
	\right]^{T}.
\end{equation}
When $\theta_m^{\mathrm{e}}=0$, the boresight vector reduces to the reference direction $\mathbf{f}_{\mathrm{ref}}=[0,1,0]^{T}$, which coincides with the array normal.
Boresight steering adjusts the direction-dependent gain of each RA while keeping the phase center fixed.
Mechanical steering imposes the elevation-angle constraint 
\begin{equation}
	0\leq \theta_m^{\mathrm{e}} \le \theta_{\max}, \quad \forall m\in\mathcal{M},
\end{equation}
where $\theta_{\max}\leq \pi/2$ denotes the maximum allowable elevation tilt of the mechanical platform.
By contrast, the azimuth angle $\theta_m^{\mathrm{a}}$ remains unconstrained over $[0,2\pi)$.

The IRS center is located at $\mathbf{r}_0\in\mathbb{R}^{3}$.
We adopt a square aperture with $N_y=N_z$, where $d_{\mathrm{IRS}}$ denotes the center-to-center spacing between adjacent reflecting elements.
The side length and aperture diagonal are given by $L_{\mathrm R}\triangleq (N_{\mathrm{side}}-1)d_{\mathrm{IRS}}$ and $D_{\mathrm R}\triangleq L_{\mathrm R}\sqrt{2}$, respectively.
The reference normal $\mathbf{n}_{0}$ is defined as the unit vector pointing from IRS center $\mathbf{r}_0$ toward BS array center $\mathbf{b}_0$.
The reference offset of element $n$ from the IRS center is denoted by $\bar{\mathbf r}_n$, expressed in the global coordinate system under the reference IRS orientation, with $\mathbf n_0^T\bar{\mathbf r}_n=0$.
The IRS orientation is parameterized by the Euler-angle vector $\boldsymbol{\psi}=[\alpha,\beta,\phi]^{T}$ through the active rotation
$\mathbf{R}(\boldsymbol{\psi})=\mathbf{R}_x(\phi)\mathbf{R}_y(\beta)\mathbf{R}_z(\alpha)$, where $\mathbf{R}_x(\cdot)$, $\mathbf{R}_y(\cdot)$, and $\mathbf{R}_z(\cdot)$ denote the elementary rotation matrices about the global $x$-, $y$-, and $z$-axes, respectively.
The composite matrix $\mathbf{R}(\boldsymbol{\psi})$ corresponds to the sequential application of $\mathbf{R}_z(\alpha)$, $\mathbf{R}_y(\beta)$, and $\mathbf{R}_x(\phi)$.
The rotation matrix is given by
\begin{equation}
	\label{eq:rotation_matrix}
	\!\!\!\!	\mathbf{R}(\boldsymbol{\psi})
	\!\!	=\!\!
	\begin{bmatrix}
		c_\alpha c_\beta &
		-s_\alpha c_\beta & s_\beta \\
		c_\alpha s_\beta s_\phi + s_\alpha c_\phi &
		-s_\alpha s_\beta s_\phi + c_\alpha c_\phi &
		-c_\beta s_\phi \\
		-c_\alpha s_\beta c_\phi + s_\alpha s_\phi &
		s_\alpha s_\beta c_\phi + c_\alpha s_\phi &
		c_\beta c_\phi
	\end{bmatrix},
\end{equation}
where $c_\alpha=\cos\alpha$, $s_\alpha=\sin\alpha$, and similarly for $\beta$ and $\phi$.
The identity rotation $\boldsymbol{\psi}=\mathbf{0}$ corresponds to the reference orientation.
The Euler angles satisfy $|\alpha|\leq\alpha_{\max}$, $|\beta|\leq\beta_{\max}$, and $|\phi|\leq\phi_{\max}$, where $\alpha_{\max}$, $\beta_{\max}$, and $\phi_{\max}$ denote the allowable rotation ranges of the IRS platform along the three axes.
Under rotation $\boldsymbol{\psi}$, the global  position of IRS element $n$ becomes
\begin{equation}
	\label{eq:IRS_rotation_pos}
	\mathbf{r}_n(\boldsymbol{\psi})
	= \mathbf{r}_0+\mathbf{R}(\boldsymbol{\psi})\bar{\mathbf{r}}_n,
\end{equation}
and the corresponding surface normal becomes
\begin{equation}
	\label{eq:IRS_rotation_normal}
	\mathbf{n}(\boldsymbol{\psi})
	= \mathbf{R}(\boldsymbol{\psi})\mathbf{n}_{0}.
\end{equation}

\subsection{Channel Model}
\label{subsec:channel}
The user-BS and user-IRS channels are modeled under far-field plane-wave assumptions, whereas the IRS-BS channel is modeled in the near field due to the short IRS-BS distance.\footnote{Rayleigh distance for an aperture of diagonal $D_{\mathrm{R}}$ is $R_{\mathrm{R}} = 2D_{\mathrm{R}}^2/\lambda$. The default IRS configuration in Section~\ref{sec:results} satisfies $d_{\mathrm{RB}}<R_{\mathrm{R}}$, so the IRS-BS link is modeled by a spherical-wave near-field channel.}
This leads to a mixed-field channel model with far-field user-related links and near-field IRS-BS propagation.

\subsubsection{Directional Gain Model}
Each RA and IRS element follows the cosine-power directional gain pattern~\cite{peng2025ra_spectrum}.
For a directional axis $\mathbf{n}$ and a propagation direction $\mathbf{u}$, the angular offset is $\varepsilon=\arccos(\mathbf{n}^{T}\mathbf{u})$, and the element gain is modeled as
\begin{equation}
	\label{eq:gain_pattern_modeling}
	G_{\mathrm{e}}(\varepsilon)=
	\begin{cases}
		G_0\cos^{2p}(\varepsilon), & \varepsilon\in[0,\pi/2],\\
		0, & \varepsilon\in(\pi/2,\pi],
	\end{cases}
\end{equation}
where $p\geq1/2$ controls the beamwidth, and $G_0=2(2p+1)$ ensures power conservation over the front hemisphere. 
The BS RAs and IRS elements use separate parameters $(G_{0,\mathrm{BS}},p_{\mathrm{BS}})$ and $(G_{0,\mathrm{IRS}},p_{\mathrm{IRS}})$, respectively. 
All direction vectors in $G_{\mathrm{e}}(\cdot)$ are defined along the wave-propagation direction.
Accordingly, the angular arguments of $G_{\mathrm{e}}(\cdot)$ for the user-BS reception, user-IRS incidence, IRS-BS reflection, and IRS-BS reception are given by
$\arccos(-\mathbf{f}_m^{T}\mathbf{q}_{\mathrm{B},k,\ell})$,
$\arccos(-\mathbf{n}(\boldsymbol{\psi})^{T}\mathbf{q}_{\mathrm{R},k,\ell})$,
$\arccos(\mathbf{n}(\boldsymbol{\psi})^{T}\hat{\mathbf{d}}_{n,m})$,
and
$\arccos(-\mathbf{f}_m^{T}\hat{\mathbf{d}}_{n,m})$,
respectively.

\subsubsection{User-to-BS Channel}
The direct user-BS channel $\mathbf{h}_{\mathrm{B},k}(\boldsymbol{\Theta}) \in\mathbb{C}^{M\times 1}$ from user $k$ to the BS contains $L_{\mathrm{B},k}$ multipath components, following the
geometry-based sparse channel representation~\cite{ayach2014spatially_sparse}.
The scatterer associated with component $\ell$  is located at  $\mathbf{o}_{\mathrm{B},k,\ell}\in\mathbb{R}^3$.
The corresponding propagation distance is $d_{\mathrm{B},k,\ell} = \|\mathbf{o}_{\mathrm{B},k,\ell} - \mathbf{u}_k\| + \|\mathbf{b}_0 - \mathbf{o}_{\mathrm{B},k,\ell}\|$, and the corresponding unit arrival direction at the BS is $\mathbf{q}_{\mathrm{B},k,\ell} = (\mathbf{b}_0 - \mathbf{o}_{\mathrm{B},k,\ell})/\|\mathbf{b}_0 - \mathbf{o}_{\mathrm{B},k,\ell}\|$.
For the line-of-sight (LoS) component with $\ell=1$, the propagation distance and arrival direction reduce to $d_{\mathrm{B},k,1} = \|\mathbf{b}_0 - \mathbf{u}_k\|$ and
$\mathbf{q}_{\mathrm{B},k,1} = (\mathbf{b}_0 - \mathbf{u}_k)/d_{\mathrm{B},k,1}$.
The channel vector $\mathbf{h}_{\mathrm{B},k}(\boldsymbol{\Theta})$ is given by
\begin{equation}
	\label{eq:hkB_total}
	\mathbf{h}_{\mathrm{B},k}(\boldsymbol{\Theta}) =
	\sum\nolimits_{\ell=1}^{L_{\mathrm{B},k}}
	c_{\mathrm{B},k,\ell}
	\mathrm{diag}\!\Bigl(\sqrt{
		\mathbf{g}_{\mathrm{B},k,\ell}(\boldsymbol{\Theta})}\Bigr)
	\mathbf{a}_{\mathrm{B},k,\ell},
\end{equation}
where $c_{\mathrm{B},k,\ell}\in\mathbb{C}$ denotes the complex path gain, $\mathbf{g}_{\mathrm{B},k,\ell}(\boldsymbol{\Theta}) \in\mathbb{R}^{M\times1}$ denotes the per-antenna directional gain vector, and $\mathbf{a}_{\mathrm{B},k,\ell} 	\in\mathbb{C}^{M\times1}$ denotes the BS steering vector.
For the LoS component, path gain follows the free-space model $c_{\mathrm{B},k,1} =
(\lambda/4\pi d_{\mathrm{B},k,1})\, e^{-jk_c d_{\mathrm{B},k,1}}$.
The $m$-th entry of $\mathbf{g}_{\mathrm{B},k,1}$ is $G_{\mathrm{e}}(\varepsilon_{k,m}^{\mathrm{dir}})$ with $(G_0, p) = (G_{0,\mathrm{BS}}, p_{\mathrm{BS}})$, where $\varepsilon_{k,m}^{\mathrm{dir}} = \arccos(-\mathbf{f}_m^{T}\mathbf{q}_{\mathrm{B},k,1})$ denotes the angular offset between RA boresight $\mathbf{f}_m$ and arrival direction $\mathbf{q}_{\mathrm{B},k,1}$. The corresponding LoS steering vector entry is
\begin{equation}
	\label{eq:BS_LoS_steering}
	[\mathbf{a}_{\mathrm{B},k,1}]_m = \exp \bigl(-jk_c\,\mathbf{q}_{\mathrm{B},k,1}^{T}(\mathbf{b}_m - \mathbf{b}_0)\bigr).
\end{equation}
For non-LoS (NLoS) components with $\ell\ge2$, the complex path gain is modeled as $c_{\mathrm{B},k,\ell}=\kappa_{\mathrm{NLoS}}\frac{\lambda}{4\pi d_{\mathrm{B},k,\ell}}e^{-jk_c d_{\mathrm{B},k,\ell}+j\varphi_{\mathrm{B},k,\ell}},$
where $\kappa_{\mathrm{NLoS}}\in(0,1)$ denotes the NLoS amplitude scaling factor, and $\varphi_{\mathrm{B},k,\ell}$ is uniformly distributed over $[0,2\pi)$.
For notational consistency, the directional gain vector and steering vector for each NLoS path follow the same form as the LoS component in~\eqref{eq:BS_LoS_steering}, with $\mathbf{q}_{\mathrm{B},k,1}$ replaced by $\mathbf{q}_{\mathrm{B},k,\ell}$.

\subsubsection{User-to-IRS Channel}
The user-to-IRS channel $\mathbf{h}_{\mathrm{R},k}(\boldsymbol{\psi}) \in\mathbb{C}^{N\times1}$ from user $k$ to the IRS contains $L_{\mathrm{IR},k}$ multipath components.
The scatterer associated with component $\ell$ is located at $\mathbf{o}_{\mathrm{R},k,\ell}\in\mathbb{R}^3$. The corresponding propagation distance is $d_{\mathrm{R},k,\ell} = \|\mathbf{o}_{\mathrm{R},k,\ell} - \mathbf{u}_k\| + \|\mathbf{r}_0 - \mathbf{o}_{\mathrm{R},k,\ell}\|$, and the corresponding unit arrival direction at the IRS center is $\mathbf{q}_{\mathrm{R},k,\ell} = (\mathbf{r}_0 -  \mathbf{o}_{\mathrm{R},k,\ell})/ \|\mathbf{r}_0 - \mathbf{o}_{\mathrm{R},k,\ell}\|$.
For the LoS component with $\ell=1$, the propagation distance and arrival direction reduce to
$d_{\mathrm{R},k,1} = \|\mathbf{r}_0 - \mathbf{u}_k\|$ and $\mathbf{q}_{\mathrm{R},k,1} =
(\mathbf{r}_0 - \mathbf{u}_k)/d_{\mathrm{R},k,1}$.
The user-to-IRS channel vector is given by
\begin{equation}
	\label{eq:hkR_multipath}
	\mathbf{h}_{\mathrm{R},k}(\boldsymbol{\psi})
	=
	\sum_{\ell=1}^{L_{\mathrm{IR},k}}
	c_{\mathrm{R},k,\ell}\,
	\sqrt{G_{k,\ell}^{\mathrm{inc}}(\boldsymbol{\psi})}\,
	\mathbf{a}_{\mathrm{R},k,\ell}(\boldsymbol{\psi}),
\end{equation}
where $c_{\mathrm{R},k,\ell}\in\mathbb{C}$ denotes the complex path gain, $G_{k,\ell}^{\mathrm{inc}}(\boldsymbol{\psi})$ denotes the incidence-direction gain, and
$\mathbf{a}_{\mathrm{R},k,\ell}(\boldsymbol{\psi}) \in\mathbb{C}^{N\times1}$ denotes the IRS steering vector. 
For the LoS component, path gain follows the free-space model
$c_{\mathrm{R},k,1} = (\lambda/4\pi d_{\mathrm{R},k,1})\, e^{-jk_c d_{\mathrm{R},k,1}}$.
The incidence-direction gain is $G_{k,1}^{\mathrm{inc}}(\boldsymbol{\psi}) =
G_{\mathrm{e}}(\varepsilon_k^{\mathrm{inc}})$, where $\varepsilon_k^{\mathrm{inc}} =
\arccos(-\mathbf{n}(\boldsymbol{\psi})^{T} \mathbf{q}_{\mathrm{R},k,1})$ denotes the incidence angle between surface normal $\mathbf{n}(\boldsymbol{\psi})$ and arrival direction $\mathbf{q}_{\mathrm{R},k,1}$, with gain parameters $(G_0,p) = (G_{0,\mathrm{IRS}}, p_{\mathrm{IRS}})$. 
The $n$-th entry of the LoS steering vector is
\begin{equation}
	\label{eq:IRS_LoS_steering}
	[\mathbf{a}_{\mathrm{R},k,1}(\boldsymbol{\psi})]_n = \exp \bigl(-jk_c\,\mathbf{q}_{\mathrm{R},k,1}^{T}(\mathbf{r}_n(\boldsymbol{\psi}) - \mathbf{r}_0)\bigr).
\end{equation}
For NLoS components with $\ell\ge2$, the complex path gain is modeled as
$c_{\mathrm{R},k,\ell}=\kappa_{\mathrm{NLoS}}\frac{\lambda}{4\pi d_{\mathrm{R},k,\ell}}e^{-jk_c d_{\mathrm{R},k,\ell}+j\varphi_{\mathrm{R},k,\ell}}
$,	where $\varphi_{\mathrm{R},k,\ell}$ is uniformly distributed over $[0,2\pi)$.
The incidence-direction gain and steering vector for each NLoS path follow the same form as the LoS component in~\eqref{eq:IRS_LoS_steering}, with $\mathbf{q}_{\mathrm{R},k,1}$ replaced by $\mathbf{q}_{\mathrm{R},k,\ell}$.

\subsubsection{IRS-to-BS Channel}
The IRS-BS channel
$\mathbf{H}_{\mathrm{RB}}(\boldsymbol{\Theta},\boldsymbol{\psi}) \in\mathbb{C}^{M\times N}$ captures element-wise near-field propagation between the IRS and the BS.
We model the IRS-BS channel with a single LoS component.
For each element-antenna pair $(n,m)$, the propagation distance is $d_{n,m}(\boldsymbol{\psi}) = \|\mathbf{b}_m - \mathbf{r}_n(\boldsymbol{\psi})\|$, and the unit direction vector from IRS element $n$ toward BS antenna $m$ is $\hat{\mathbf{d}}_{n,m}(\boldsymbol{\psi})
= (\mathbf{b}_m - \mathbf{r}_n(\boldsymbol{\psi}))/ d_{n,m}(\boldsymbol{\psi})$.
The $(m,n)$-th entry of the IRS-BS channel matrix is
\begin{equation}
	\label{eq:HRB_entry}
	\!\!\!	[\mathbf{H}_{\mathrm{RB}}]_{m,n}
	\!=\!
	\frac{\lambda\sqrt{G_{n,m}^{\mathrm{ref}}(\boldsymbol{\psi})\,
			G_{m,n}^{\mathrm{BS}}(\boldsymbol{\theta}_m,\boldsymbol{\psi})}}
	{4\pi d_{n,m}(\boldsymbol{\psi})}
	\,e^{-jk_c d_{n,m}(\boldsymbol{\psi})},
\end{equation}
where $G_{n,m}^{\mathrm{ref}}(\boldsymbol{\psi}) = G_{\mathrm{e}}(\varepsilon_{n,m}^{\mathrm{ref}})$ denotes the IRS reflection-direction gain at element $n$ toward antenna $m$, and $G_{m,n}^{\mathrm{BS}}(\boldsymbol{\theta}_m,\boldsymbol{\psi}) = G_{\mathrm{e}}(\varepsilon_{m,n}^{\mathrm{BS}})$ denotes the BS receive directional gain at antenna $m$ from element $n$. 
The IRS reflection angle and the BS receive off-boresight angle are defined as
$\varepsilon_{n,m}^{\mathrm{ref}} = \arccos(\mathbf{n}^{T}\hat{\mathbf{d}}_{n,m})$
and
$\varepsilon_{m,n}^{\mathrm{BS}} = \arccos(-\mathbf{f}_m^{T}\hat{\mathbf{d}}_{n,m})$,
respectively, where they characterize the angular mismatch with respect to the IRS surface normal and the RA boresight.

\subsection{Signal Model and Problem Formulation}
\label{subsec:problem}
The IRS phase-shift matrix is denoted by
$\mathbf{V}=\mathrm{diag}(\mathbf{v})\in\mathbb{C}^{N\times N}$, where
$\mathbf{v}=[v_1,\ldots,v_N]^T$ and
$v_n=e^{j\phi_n}$ with $\phi_n\in[0,2\pi)$, $\forall n\in\mathcal{N}$.
Each user $k$ transmits symbol $x_k\sim\mathcal{CN}(0,1)$ with uplink power $P_k\ge 0$.
The effective  channel of user $k$ combines direct and IRS-reflected paths as
\begin{equation}
	\label{eq:composite_channel}
	\!\!\!	\mathbf{h}_k(\boldsymbol{\Theta},\boldsymbol{\psi},\mathbf{v})
	\!	=\!
	\mathbf{h}_{\mathrm{B},k}(\boldsymbol{\Theta})
	\!	+\!
	\mathbf{H}_{\mathrm{RB}}(\boldsymbol{\Theta},\boldsymbol{\psi})
	\,\mathrm{diag}(\mathbf{v})\,
	\mathbf{h}_{\mathrm{R},k}(\boldsymbol{\psi}).
\end{equation}
The received signal at the BS is
\begin{equation}
	\label{eq:received_signal}
	\mathbf{y}
	=
	\sum\nolimits_{k=1}^{K}\sqrt{P_k}\,
	\mathbf{h}_k(\boldsymbol{\Theta},\boldsymbol{\psi},\mathbf{v})
	x_k + \mathbf{z},
\end{equation}
where $\mathbf{z}\sim\mathcal{CN}(\mathbf{0},\sigma^2\mathbf{I}_M)$ denotes additive white Gaussian noise with power $\sigma^2$, and $\mathbf{I}_M$ denotes the $M\times M$ identity matrix. 
A nonzero linear receive beamforming vector $\mathbf w_k\in\mathbb C^{M\times 1}$ is applied at the BS for user $k$, $\mathbf W=[\mathbf w_1,\ldots,\mathbf w_K]\in\mathbb C^{M\times K}$.
The signal-to-interference-plus-noise ratio (SINR) of user $k$ is
\begin{equation}
	\label{eq:sinr}
	\gamma_k
	=
	\frac{P_k|\mathbf{w}_k^{H}\mathbf{h}_k|^2}
	{\displaystyle\sum\nolimits_{j\neq k}
		P_j|\mathbf{w}_k^{H}\mathbf{h}_j|^2
		+\sigma^2\|\mathbf{w}_k\|^2}.
\end{equation}
The achievable uplink sum rate is $R_{\mathrm{sum}} = \sum\nolimits_{k=1}^{K}\log_2(1+\gamma_k)$.

We aim to maximize the sum rate by jointly optimizing the receive beamforming matrix $\mathbf{W}$, the BS boresight-angle matrix $\boldsymbol{\Theta}$, the IRS orientation vector $\boldsymbol{\psi}$, and the IRS reflection vector $\mathbf{v}$.
The optimization problem is formulated as
\begin{subequations}
	\label{eq:P1}
	\begin{align}
		(\mathrm{P1}):\ \max_{\mathbf{W},\boldsymbol{\Theta},
			\boldsymbol{\psi},\mathbf{v}}\
		& \sum_{k=1}^{K}\log_2(1+\gamma_k)
		\label{eq:P1_obj}\\
		\text{s.t.}\
		& \mathbf{n}(\boldsymbol{\psi})^{T}
		(\mathbf{b}_0-\mathbf{r}_0)\ge 0,
		\label{eq:P1_const_IRS_bs}\\
		& |\alpha|\leq\alpha_{\max},  |\beta|\leq\beta_{\max},  
		|\phi|\leq\phi_{\max},
		\label{eq:P1_const_box}\\
		& 0\le\theta_m^{\mathrm{e}}\le\theta_{\max},\
		\forall m\in\mathcal{M},
		\label{eq:P1_const_elev}\\
		& |v_n|=1,\ \forall n\in\mathcal{N}.
		\label{eq:P1_const_unimodular}
	\end{align}
\end{subequations}
Constraints~\eqref{eq:P1_const_IRS_bs}-\eqref{eq:P1_const_unimodular} represent the IRS half-space visibility condition, the allowable ranges of the three IRS Euler angles, the RA elevation steering limits, and the unit-modulus IRS reflection coefficients, respectively.
Problem~(P1) is non-convex and difficult to solve optimally due to the coupled optimization variables in the objective function and the unit-modulus IRS reflection constraints.

\section{Single-User System} 
\label{sec:nearfield}

In this section, we specialize the proposed uplink design to the single-user case with $K=1$.
Problem~(P1) is rewritten as
\begin{subequations}
	\label{eq:PSU}
	\begin{align}
		(\mathrm{P2}):\
		\max_{\mathbf{w},\boldsymbol{\Theta},\boldsymbol{\psi},\mathbf{v}}\
		&
		\frac{P\,|\mathbf{w}^{H}\mathbf{h}(\boldsymbol{\Theta},\boldsymbol{\psi},\mathbf{v})|^{2}}
		{\sigma^{2}\|\mathbf{w}\|^{2}}
		\label{eq:PSU_obj}\\
		\mathrm{s.t.}\
		&
		\eqref{eq:P1_const_IRS_bs},
		\eqref{eq:P1_const_box},
		\eqref{eq:P1_const_elev},
		\eqref{eq:P1_const_unimodular},
	\end{align}
\end{subequations}
where $P$ denotes the uplink transmit power and $\mathbf{w}\in\mathbb{C}^{M\times 1}$ is the receive combining vector.
For any given $\boldsymbol{\Theta}$, $\boldsymbol{\psi}$, and $\mathbf v$, 
the effective channel $\mathbf h(\boldsymbol{\Theta},\boldsymbol{\psi},\mathbf v)$ is fixed. 
By the Cauchy-Schwarz inequality, the receive-combining subproblem is maximized by the maximum-ratio combining (MRC) receiver~\cite{tse2005fundamentals}. 
Thus, without loss of optimality, the unit-norm receive combiner is given by
\begin{equation}
	\mathbf w^{\star}
	=
	\frac{\mathbf h(\boldsymbol{\Theta},\boldsymbol{\psi},\mathbf v)}
	{\|\mathbf h(\boldsymbol{\Theta},\boldsymbol{\psi},\mathbf v)\|}.
\end{equation}
The corresponding maximum signal-to-noise ratio (SNR) is 
\begin{equation}
	\gamma^{\star}=P\|\mathbf{h}(\boldsymbol{\Theta},\boldsymbol{\psi},\mathbf{v})\|^{2}/\sigma^{2}.
\end{equation}
With the MRC receiver, Problem~(P2) can be equivalently rewritten as
\begin{subequations}
	\label{eq:PSU_equiv}
	\begin{align}
		\label{eq:PSU_eq}
		\max_{\boldsymbol{\Theta},\boldsymbol{\psi},\mathbf{v}}\
		&F(\boldsymbol{\Theta},\boldsymbol{\psi},\mathbf{v})\\
		\quad\mathrm{s.t.}\
		&
		\eqref{eq:P1_const_IRS_bs},\eqref{eq:P1_const_box},\eqref{eq:P1_const_elev},\eqref{eq:P1_const_unimodular},
	\end{align}
\end{subequations}
where $F(\boldsymbol{\Theta},\boldsymbol{\psi},\mathbf{v})=
\|\mathbf{h}(\boldsymbol{\Theta},\boldsymbol{\psi},\mathbf{v})\|^{2}$.
The problem~\eqref{eq:PSU_equiv} remains non-convex due to the unit-modulus constraints on $\mathbf{v}$ and the rotation-dependent cascaded channel.

\subsection{Proposed Solution}
\label{subsec:su_AO}

For problem~\eqref{eq:PSU_equiv}, we update $\mathbf{v}$, $\boldsymbol{\Theta}$, and $\boldsymbol{\psi}$ alternately with the other variables fixed.

\emph{1) IRS Phase-Shift Update:}
For fixed $\boldsymbol{\Theta}$ and $\boldsymbol{\psi}$, let
$\mathbf{h}_{\mathrm{RB},n}(\boldsymbol{\Theta},\boldsymbol{\psi})$
denote the $n$-th column of
$\mathbf{H}_{\mathrm{RB}}(\boldsymbol{\Theta},\boldsymbol{\psi})$.
Define $\mathbf{A}=[\mathbf{a}_1,\ldots,\mathbf{a}_n,\ldots,\mathbf{a}_N]$, where
\begin{equation}
	\label{eq:su_an_def}
	\mathbf{a}_n
	=
	[\mathbf{h}_{\mathrm{R}}(\boldsymbol{\psi})]_n
	\mathbf{h}_{\mathrm{RB},n}(\boldsymbol{\Theta},\boldsymbol{\psi}),
	\quad n=1,\ldots,N .
\end{equation}
Then
$\mathbf{H}_{\mathrm{RB}}(\boldsymbol{\Theta},\boldsymbol{\psi})
\mathrm{diag}(\mathbf{v})\mathbf{h}_{\mathrm{R}}(\boldsymbol{\psi})
=\mathbf{A}\mathbf{v}$, and the phase-shift subproblem becomes
\begin{equation}
	\label{eq:su_phase_subproblem}
	\begin{aligned}
		\max_{\mathbf{v}}\quad
		&
		\left\|
		\mathbf{h}_{\mathrm{B}}(\boldsymbol{\Theta})
		+
		\mathbf{A}\mathbf{v}
		\right\|^{2}\\
		\mathrm{s.t.}\quad
		&
		|v_n|=1,\quad n=1,\ldots,N .
	\end{aligned}
\end{equation}
We update $\mathbf{v}$ via block coordinate descent (BCD), where each phase shift is updated with the other phase shifts fixed.
With $\{v_i\}_{i\neq n}$ fixed, the per-coordinate subproblem is
\begin{equation}
	\label{eq:su_vn_bcd}
	\begin{aligned}
		\max_{v_n}\quad
		&
		|v_n|^{2}\mathbf{a}_n^{H}\mathbf{a}_n
		+
		2\Re\!\left\{
		v_n^{*}\eta_n
		\right\}\\
		\mathrm{s.t.}\quad
		&
		|v_n|=1 ,
	\end{aligned}
\end{equation}
where
$\eta_n
\triangleq
\mathbf{a}_n^{H} \left(
\mathbf{h}_{\mathrm{B}}(\boldsymbol{\Theta})
+\sum_{i\neq n}v_i\mathbf{a}_i
\right)$.
The corresponding closed-form update is
\begin{equation}
	\label{eq:su_phase_update}
	v_n^{\mathrm{opt}}
	=
	e^{j\angle\eta_n},
\end{equation}
where any unit-modulus value is feasible when $\eta_n=0$.

\emph{2) Rotation Update:}
For fixed $\mathbf{v}$, the rotation subproblem is given by
\begin{equation}
	\label{eq:su_rotation_subproblem}
	\begin{aligned}
		\max_{\boldsymbol{\Theta},\boldsymbol{\psi}}\quad
		&
		F(\boldsymbol{\Theta},\boldsymbol{\psi},\mathbf{v})\\
		\mathrm{s.t.}\quad
		&
		\eqref{eq:P1_const_IRS_bs},
		\eqref{eq:P1_const_box},
		\eqref{eq:P1_const_elev}.
	\end{aligned}
\end{equation}
We update the BS boresights and IRS orientation by PGA.

For the BS boresight update, $\boldsymbol{\psi}$ and
$\{\boldsymbol{\theta}_{i}\}_{i\neq m}$ are fixed.
The gradient of $F$ with respect to $\boldsymbol{\theta}_m$ is
$	\nabla_{\boldsymbol{\theta}_m}F
=
2\Re\!\left\{
[\mathbf{h}]_m^{*}
\frac{\partial[\mathbf{h}]_m}
{\partial\boldsymbol{\theta}_m}
\right\}$,
where $\partial[\mathbf{h}]_m/\partial\boldsymbol{\theta}_m$
is given by~\eqref{eq:su_RA_entry_derivative} at the top of the next page.
\begin{figure*}[!t]
	\vspace*{4pt}
	\begin{equation}
		\label{eq:su_RA_entry_derivative}
		\frac{\partial[\mathbf{h}]_m}{\partial\boldsymbol{\theta}_m}
		=
		\sum\nolimits_{\ell=1}^{L_{\mathrm{B}}}
		c_{\mathrm{B},\ell}
		[\mathbf{a}_{\mathrm{B},\ell}]_m
		\frac{\partial\sqrt{[\mathbf{g}_{\mathrm{B},\ell}]_m}}
		{\partial\boldsymbol{\theta}_m}
		+
		\sum\nolimits_{n=1}^{N}
		\frac{\lambda v_n[\mathbf{h}_{\mathrm{R}}]_n
			\sqrt{G^{\mathrm{ref}}_{n,m}(\boldsymbol{\psi})}}
		{4\pi d_{n,m}(\boldsymbol{\psi})}
		e^{-jk_c d_{n,m}(\boldsymbol{\psi})}
		\frac{\partial\sqrt{G^{\mathrm{BS}}_{m,n}}}
		{\partial\boldsymbol{\theta}_m}.
	\end{equation}
	\hrulefill
	\vspace*{4pt}
\end{figure*}
Let $\zeta_{\mathrm{BS}}=-p_{\mathrm{BS}}\sqrt{G_{0,\mathrm{BS}}}$, the square-root gain derivatives are given by
	\begin{align}
		\!\!	&\frac{\partial\sqrt{[\mathbf{g}_{\mathrm{B},\ell}]_m}}
		{\partial\boldsymbol{\theta}_m}
		\!=\!
		\zeta_{\mathrm{BS}}
		(-\mathbf{f}_m^{T}\mathbf{q}_{\mathrm{B},\ell})^{p_{\mathrm{BS}}-1}
		\left(
		\frac{\partial\mathbf{f}_m}{\partial\boldsymbol{\theta}_m}
		\right)^{T}
		\mathbf{q}_{\mathrm{B},\ell}, 
		\label{eq:su_RA_gain_derivatives}\\
		\!\!	&\frac{\partial\sqrt{G^{\mathrm{BS}}_{m,n}}}
		{\partial\boldsymbol{\theta}_m}
		\!\!=\! 
		\zeta_{\mathrm{BS}}\!\! 
		\left(
		-\mathbf{f}_m^{T}\hat{\mathbf{d}}_{n,m}(\boldsymbol{\psi})
		\right)^{p_{\mathrm{BS}}-1}\!\!\!
		\left(
		\frac{\partial\mathbf{f}_m}{\partial\boldsymbol{\theta}_m}
		\right)^{T}\!\!\!\!\!\!
		\hat{\mathbf{d}}_{n,m}(\boldsymbol{\psi}),
		\label{eq:su_RA_gain_derivatives2}
	\end{align}
where 
\begin{equation}
	\label{eq:su_boresight_jacobian}
	\frac{\partial\mathbf{f}_m}{\partial\boldsymbol{\theta}_m}
	=
	\begin{bmatrix}
		\cos\theta_m^{\mathrm{e}}\cos\theta_m^{\mathrm{a}} &
		-\sin\theta_m^{\mathrm{e}}\sin\theta_m^{\mathrm{a}}\\
		-\sin\theta_m^{\mathrm{e}} &
		0 \\
		\cos\theta_m^{\mathrm{e}}\sin\theta_m^{\mathrm{a}} &
		\sin\theta_m^{\mathrm{e}}\cos\theta_m^{\mathrm{a}}
	\end{bmatrix}.
\end{equation}
The derivatives in~\eqref{eq:su_RA_gain_derivatives} and~\eqref{eq:su_RA_gain_derivatives2} are evaluated only for positive directional cosines.
If the corresponding direction lies outside the front hemisphere, the gain term and its derivative are set to zero according to~\eqref{eq:gain_pattern_modeling}.
At iteration $r$, the boresight angle of RA~$m$ is updated by
\begin{equation}
	\label{eq:su_RA_projected_update}
	\boldsymbol{\theta}_m^{(r+1)}
	=
	\Pi_{\mathcal{X}_{\boldsymbol{\theta}_m}}
	\left(
	\boldsymbol{\theta}_m^{(r)}
	+
	s_{\mathrm{RA}}^{(r)}
	\nabla_{\boldsymbol{\theta}_m}F
	\right),
\end{equation}
where $\mathcal{X}_{\boldsymbol{\theta}_m}$ denotes the feasible set specified by~\eqref{eq:P1_const_elev}, and
$\Pi_{\mathcal{X}_{\boldsymbol{\theta}_m}}(\cdot)$ denotes the projection onto this set.
The step size $s_{\mathrm{RA}}^{(r)}$ is chosen by backtracking such that the objective value is non-decreasing after the update.

For the IRS orientation update, $\boldsymbol{\Theta}$ is fixed.
The gradient with respect to $\boldsymbol{\psi}$ is assembled from the three Euler-angle components.
For $\psi_i\in\{\alpha,\beta,\phi\}$, the $i$-th component is
\begin{equation}
	\label{eq:su_psi_gradient}
	\left[\nabla_{\boldsymbol{\psi}}F\right]_i
	=
	2\Re\!\left\{
	\left(
	\frac{\partial\mathbf{h}}{\partial\psi_i}
	\right)^{H}
	\mathbf{h}
	\right\},
\end{equation}
where
\begin{equation}
	\label{eq:su_channel_derivative_psi}
	\frac{\partial\mathbf{h}}{\partial\psi_i}
	=
	\mathbf{H}_{\mathrm{RB}}
	\mathrm{diag}(\mathbf{v})
	\frac{\partial\mathbf{h}_{\mathrm{R}}}{\partial\psi_i}
	+
	\frac{\partial\mathbf{H}_{\mathrm{RB}}}{\partial\psi_i}
	\mathrm{diag}(\mathbf{v})
	\mathbf{h}_{\mathrm{R}}.
\end{equation}
Here, $\partial[\mathbf{h}_{\mathrm{R}}]_n/\partial\psi_i$ and $ \partial[\mathbf{H}_{\mathrm{RB}}]_{m,n}/\partial\psi_i$ are given in~\eqref{eq:su_hR_derivative} and ~\eqref{eq:su_HRB_derivative}, respectively, at the top of the next page.
\begin{figure*}[!t]
	\vspace*{1pt}
	\begin{equation}
		\label{eq:su_hR_derivative}
		\begin{aligned}
			\frac{\partial[\mathbf{h}_{\mathrm{R}}]_n}{\partial\psi_i}
			=
			\sum\nolimits_{\ell=1}^{L_{\mathrm{IR}}}
			c_{\mathrm{R},\ell}
			\Bigg[
			&
			-p_{\mathrm{IRS}}\sqrt{G_{0,\mathrm{IRS}}}
			\bigl[\cos(\varepsilon_{\ell}^{\mathrm{inc}})\bigr]^{p_{\mathrm{IRS}}-1}
			\left(
			\frac{\partial\mathbf{n}}{\partial\psi_i}
			\right)^{T}
			\mathbf{q}_{\mathrm{R},\ell}
			[\mathbf{a}_{\mathrm{R},\ell}]_n\\
			&
			+
			\sqrt{G_{\ell}^{\mathrm{inc}}(\boldsymbol{\psi})}
			\left(
			-jk_c\mathbf{q}_{\mathrm{R},\ell}^{T}
			\frac{\partial\mathbf{r}_n}{\partial\psi_i}
			\right)
			e^{-jk_c\mathbf{q}_{\mathrm{R},\ell}^{T}
				(\mathbf{r}_n-\mathbf{r}_0)}
			\Bigg].
		\end{aligned}
	\end{equation}
	\hrulefill
	\vspace*{1pt}
\end{figure*}
\begin{figure*}[!t]
	\vspace*{1pt}
	\begin{equation}
		\label{eq:su_HRB_derivative}
		\begin{aligned}
			\frac{\partial[\mathbf{H}_{\mathrm{RB}}]_{m,n}}{\partial\psi_i}
			={}&
			\frac{\lambda}{4\pi}
			\frac{\partial d_{n,m}^{-1}}{\partial\psi_i}
			\sqrt{G^{\mathrm{ref}}_{n,m}G^{\mathrm{BS}}_{m,n}}
			e^{-jk_c d_{n,m}}
			+
			\frac{\lambda}{4\pi d_{n,m}}
			\frac{\partial\sqrt{G^{\mathrm{ref}}_{n,m}}}{\partial\psi_i}
			\sqrt{G^{\mathrm{BS}}_{m,n}}
			e^{-jk_c d_{n,m}}\\
			&+
			\frac{\lambda}{4\pi d_{n,m}}
			\sqrt{G^{\mathrm{ref}}_{n,m}}
			\frac{\partial\sqrt{G^{\mathrm{BS}}_{m,n}}}{\partial\psi_i}
			e^{-jk_c d_{n,m}}
			-
			jk_c
			\frac{\lambda}{4\pi d_{n,m}}
			\sqrt{G^{\mathrm{ref}}_{n,m}G^{\mathrm{BS}}_{m,n}}
			\frac{\partial d_{n,m}}{\partial\psi_i}
			e^{-jk_c d_{n,m}} .
		\end{aligned}
	\end{equation}
	\hrulefill
	\vspace*{1pt}
\end{figure*}
Let $\zeta_{\mathrm{IRS}}=p_{\mathrm{IRS}}\sqrt{G_{0,\mathrm{IRS}}}$, 
then the gain derivatives in~\eqref{eq:su_HRB_derivative} are
	\begin{align}
		\!\!\!\!	\frac{\partial\sqrt{G^{\mathrm{ref}}_{n,m}}}{\partial\psi_i}
		\!\!=\!\!{}&
		\zeta_{\mathrm{IRS}}
		\rho_{n,m}^{p_{\mathrm{IRS}}-1}
		\left[
		\left(
		\frac{\partial\mathbf{n}}{\partial\psi_i}
		\right)^{T}
		\hat{\mathbf{d}}_{n,m}
		\!\!+\!\!
		\mathbf{n}^{T}
		\frac{\partial\hat{\mathbf{d}}_{n,m}}{\partial\psi_i}
		\right],
		\label{eq:su_Gref_GBS_derivative}\\
		\!\!\!\!\frac{\partial\sqrt{G^{\mathrm{BS}}_{m,n}}}{\partial\psi_i}
		\!\!=\!\!{}&
		\zeta_{\mathrm{BS}}
		\bigl[\cos(\varepsilon^{\mathrm{BS}}_{m,n})\bigr]^{p_{\mathrm{BS}}-1}
		\mathbf{f}_m^{T}
		\frac{\partial\hat{\mathbf{d}}_{n,m}}{\partial\psi_i},
		\label{eq:su_Gref_GBS_derivative2}
	\end{align}
where $\rho_{n,m}=\mathbf{n}^{T}\hat{\mathbf{d}}_{n,m}$.
The distance and direction derivatives in~\eqref{eq:su_HRB_derivative} , \eqref{eq:su_Gref_GBS_derivative} and \eqref{eq:su_Gref_GBS_derivative2} are
\begin{equation}
	\label{eq:su_d_derivative}
	\begin{aligned}
		\!\!\!\!\!\!	\frac{\partial d_{n,m}}{\partial\psi_i}
		\!\!=\!\!
		-\hat{\mathbf{d}}_{n,m}^{T}
		\frac{\partial\mathbf{r}_n}{\partial\psi_i},\ 
		\frac{\partial\hat{\mathbf{d}}_{n,m}}{\partial\psi_i}
		\!\!=\!\!
		-\frac{\mathbf{I}_3-\hat{\mathbf{d}}_{n,m}\hat{\mathbf{d}}_{n,m}^{T}}
		{d_{n,m}}
		\frac{\partial\mathbf{r}_n}{\partial\psi_i},
	\end{aligned}
\end{equation}
where
$\partial\mathbf{r}_n/\partial\psi_i
=
(\partial\mathbf{R}/\partial\psi_i)\bar{\mathbf{r}}_n$,
$\partial\mathbf{n}/\partial\psi_i
=
(\partial\mathbf{R}/\partial\psi_i)\mathbf{n}_0$, and
$\partial d_{n,m}^{-1}/\partial\psi_i
=
-d_{n,m}^{-2}\partial d_{n,m}/\partial\psi_i$. 
Define
$g_{\mathrm{vis}}(\boldsymbol{\psi})
=
\mathbf{n}(\boldsymbol{\psi})^{T}(\mathbf{b}_0-\mathbf{r}_0)$.
Given the local point $\boldsymbol{\psi}^{(r)}$, the first-order approximation of $g_{\mathrm{vis}}(\boldsymbol{\psi})$ is
\begin{equation}
	\label{eq:su_facing_linearization}
	\widehat{g}_{\mathrm{vis}}^{(r)}(\boldsymbol{\psi})
	=
	g_{\mathrm{vis}}(\boldsymbol{\psi}^{(r)})
	+
	\nabla_{\boldsymbol{\psi}}g_{\mathrm{vis}}(\boldsymbol{\psi}^{(r)})^{T}
	\left(
	\boldsymbol{\psi}
	-
	\boldsymbol{\psi}^{(r)}
	\right).
\end{equation}
Then the local feasible set for the IRS orientation update is
\begin{equation}
	\label{eq:su_local_psi_set}
	\mathcal{C}_{\boldsymbol{\psi}}^{(r)}
	=
	\left\{
	\boldsymbol{\psi}\in\mathcal{X}_{\boldsymbol{\psi}}
	\mid
	\widehat{g}_{\mathrm{vis}}^{(r)}(\boldsymbol{\psi})\ge 0
	\right\}, 
\end{equation}
where $\mathcal{X}_{\boldsymbol{\psi}}$ denotes the Euler-angle box set specified by~\eqref{eq:P1_const_box}.
At iteration $r$, the IRS orientation is updated by
\begin{equation}
	\label{eq:su_IRS_projected_update}
	\bar{\boldsymbol{\psi}}^{(r+1)}
	=
	\Pi_{\mathcal{C}_{\boldsymbol{\psi}}^{(r)}}
	\left(
	\boldsymbol{\psi}^{(r)}
	+
	s_{\mathrm{IRS}}^{(r)}
	\nabla_{\boldsymbol{\psi}}F
	\right), 
\end{equation}
where $\Pi_{\mathcal{C}_{\boldsymbol{\psi}}^{(r)}}(\cdot)$ denotes the Euclidean projection onto
$\mathcal{C}_{\boldsymbol{\psi}}^{(r)}$.
The step size $s_{\mathrm{IRS}}^{(r)}$ is reduced by backtracking until
$\bar{\boldsymbol{\psi}}^{(r+1)}$ satisfies the original facing constraint in~\eqref{eq:P1_const_IRS_bs} and gives a non-decreasing value of $F$.
Then, we set
$\boldsymbol{\psi}^{(r+1)}=\bar{\boldsymbol{\psi}}^{(r+1)}$.

The single-user AO convergence follows from two facts.
First, the IRS phase update and the accepted rotation updates ensure that the objective value is non-decreasing over AO iterations.
Second, $F(\boldsymbol{\Theta},\boldsymbol{\psi},\mathbf v)$ is upper-bounded due to bounded rotation ranges, finite directional gains, nonzero propagation distances, and unit-modulus IRS coefficients.
Therefore, the objective sequence of the single-user AO procedure converges.

\subsection{Power Scaling and Rotation-Gain Mechanism}
\label{subsec:su_mechanism}

To gain useful insights into the dual-rotation mechanism, we first consider the far-field IRS-BS channel. 
Under the far-field, let
$\hat{\mathbf d}_0=(\mathbf b_0-\mathbf r_0)/d_{\mathrm{RB}}$
denote the center direction from the IRS to the BS.
Define
$\varepsilon_{m,0}^{\mathrm{BS}}
=
\arccos(-\mathbf{f}_{m}^{T}(\boldsymbol{\theta}_{m})\hat{\mathbf{d}}_{0})$
and
$\varepsilon_{0}^{\mathrm{ref}}
=
\arccos(\mathbf{n}(\boldsymbol{\psi})^{T}\hat{\mathbf{d}}_{0})$.
Then, the IRS-BS channel in~\eqref{eq:HRB_entry} reduces to
$	\mathbf{H}_{\mathrm{RB}}^{\mathrm{FF}}
(\boldsymbol{\Theta},\boldsymbol{\psi})
=
c_{\mathrm{RB}}
\mathbf{u}_{\mathrm{B}}(\boldsymbol{\Theta})
\mathbf{u}_{\mathrm{R}}^{T}(\boldsymbol{\psi})$,
where $c_{\mathrm{RB}}=\lambda e^{-jk_c d_{\mathrm{RB}}}/(4\pi d_{\mathrm{RB}})$, and
\begin{align}
	[\mathbf{u}_{\mathrm{B}}(\boldsymbol{\Theta})]_{m}
	=
	\sqrt{
		G_{\mathrm{e}}(\varepsilon_{m,0}^{\mathrm{BS}})
	}
	e^{-jk_c\hat{\mathbf{d}}_{0}^{T}(\mathbf{b}_{m}-\mathbf{b}_{0})},
	\label{eq:ff_uB_def}
	\\
	[\mathbf{u}_{\mathrm{R}}(\boldsymbol{\psi})]_{n}
	=
	\sqrt{
		G_{\mathrm{e}}(\varepsilon_{0}^{\mathrm{ref}})
	}
	e^{jk_c\hat{\mathbf{d}}_{0}^{T}(\mathbf{r}_{n}(\boldsymbol{\psi})-\mathbf{r}_{0})}.
	\label{eq:ff_uR_def}
\end{align}
Following~\eqref{eq:su_an_def}, the far-field cascaded matrix becomes
\begin{equation}
	\label{eq:ff_A_rank_one}
	\mathbf{A}_{\mathrm{FF}}
	(\boldsymbol{\Theta},\boldsymbol{\psi})
	=
	c_{\mathrm{RB}}
	\mathbf{u}_{\mathrm{B}}(\boldsymbol{\Theta})
	\mathbf{s}^{T}_{\mathrm{R}}(\boldsymbol{\psi}),
\end{equation}
where
\begin{equation}
	\mathbf{s}_{\mathrm{R}}(\boldsymbol{\psi})
	=
	\mathbf{u}_{\mathrm{R}}(\boldsymbol{\psi})
	\odot
	\mathbf{h}_{\mathrm{R}}(\boldsymbol{\psi}).
\end{equation}	
For fixed $\boldsymbol{\Theta}$ and $\boldsymbol{\psi}$, the optimized far-field reflected-channel power is defined as
\begin{equation}
	\label{eq:J_FF_definition}
	J_{\mathrm{FF}}
	(\boldsymbol{\Theta},\boldsymbol{\psi})
	\triangleq
	\max_{|v_n|=1}
	\left\|
	\mathbf{A}_{\mathrm{FF}}
	(\boldsymbol{\Theta},\boldsymbol{\psi})
	\mathbf{v}
	\right\|^{2}.
\end{equation}
The fixed-orientation baseline is denoted by 
$(\boldsymbol{\Theta}_{0},\boldsymbol{\psi}_{0})$.
For $\mathcal{X}\in\{\mathrm{FF},\mathrm{NF}\}$, define
$J_{\mathcal{X}}(\boldsymbol{\Theta},\boldsymbol{\psi})$ as the optimized 
reflected-channel power under the corresponding IRS-BS channel model.
Specifically, $J_{\mathrm{FF}}$ is given by~\eqref{eq:J_FF_definition}, whereas 
$J_{\mathrm{NF}}$ is computed using the original near-field channel
$\mathbf{H}_{\mathrm{RB}}(\boldsymbol{\Theta},\boldsymbol{\psi})$.
The BS-only, IRS-only, and dual-rotation gains are defined as
\begin{equation}
	\label{eq:eta_all}
	\!\!\!\!\!\! \eta_{\mathrm{B}}^{\mathcal{X}}
	\!	 \triangleq\! 
	\frac{
		J_{\mathcal{X}}(\boldsymbol{\Theta},\boldsymbol{\psi}_{0})
	}{
		J_{\mathcal{X}}(\boldsymbol{\Theta}_{0},\boldsymbol{\psi}_{0})
	},
	\eta_{\mathrm{I}}^{\mathcal{X}}
	\! \triangleq\! 
	\frac{
		J_{\mathcal{X}}(\boldsymbol{\Theta}_{0},\boldsymbol{\psi})
	}{
		J_{\mathcal{X}}(\boldsymbol{\Theta}_{0},\boldsymbol{\psi}_{0})
	},
	\eta_{\mathrm{dual}}^{\mathcal{X}}
	\!\!\triangleq\!\!
	\frac{
		J_{\mathcal{X}}(\boldsymbol{\Theta},\boldsymbol{\psi})
	}{
		J_{\mathcal{X}}(\boldsymbol{\Theta}_{0},\boldsymbol{\psi}_{0})
	}.
\end{equation}

\begin{proposition}
	\label{prop:ff_dual_gain}
	Under the far-field approximation, the optimized reflected-channel power is
	\begin{equation}
		\label{eq:J_FF_canonical}
		J_{\mathrm{FF}}(\boldsymbol{\Theta},\boldsymbol{\psi})
		=
		N^{2}
		\bar{p}_{\mathrm{FF}}(\boldsymbol{\Theta},\boldsymbol{\psi})
		\beta_{\mathrm{FF}}(\boldsymbol{\Theta},\boldsymbol{\psi}),
	\end{equation}
	where
	\begin{align}
		\bar{p}_{\mathrm{FF}}(\boldsymbol{\Theta},\boldsymbol{\psi})
		&=
		\frac{|c_{\mathrm{RB}}|^{2}}{N}
		\left\|
		\mathbf{u}_{\mathrm{B}}(\boldsymbol{\Theta})
		\right\|^{2}
		\sum\nolimits_{n=1}^{N}
		\left|
		[\mathbf{s}_{\mathrm{R}}(\boldsymbol{\psi})]_{n}
		\right|^{2},
		\label{eq:p_bar_FF_def}
		\\
		\beta_{\mathrm{FF}}(\boldsymbol{\Theta},\boldsymbol{\psi})
		&=
		\frac{
			\left(
			\sum_{n=1}^{N}
			\left|
			[\mathbf{s}_{\mathrm{R}}(\boldsymbol{\psi})]_{n}
			\right|
			\right)^{2}
		}{
			N
			\sum_{n=1}^{N}
			\left|
			[\mathbf{s}_{\mathrm{R}}(\boldsymbol{\psi})]_{n}
			\right|^{2}
		}.
		\label{eq:beta_FF_def}
	\end{align}
	The optimized reflected-channel power is multiplicatively separable with respect to $\boldsymbol{\Theta}$ and $\boldsymbol{\psi}$, and the far-field dual-rotation gain satisfies
	\begin{equation}
		\label{eq:eta_dual_FF_separable}
		\eta_{\mathrm{dual}}^{\mathrm{FF}}
		=
		\eta_{\mathrm{B}}^{\mathrm{FF}}
		\eta_{\mathrm{I}}^{\mathrm{FF}}.
	\end{equation}
	When the user-IRS channel is further approximated by LoS component, $\beta_{\mathrm{FF}}\equiv 1$.
\end{proposition}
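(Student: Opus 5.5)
The plan is to exploit the rank-one structure in \eqref{eq:ff_A_rank_one}. For any unit-modulus $\mathbf v$ we have
\begin{equation*}
\|\mathbf A_{\mathrm{FF}}\mathbf v\|^2=|c_{\mathrm{RB}}|^2\|\mathbf u_{\mathrm B}(\boldsymbol\Theta)\|^2\,\bigl|\mathbf s_{\mathrm R}^T(\boldsymbol\psi)\mathbf v\bigr|^2 .
\end{equation*}
The factor $|c_{\mathrm{RB}}|^2\|\mathbf u_{\mathrm B}\|^2$ depends only on $\boldsymbol\Theta$ and does not involve $\mathbf v$. The maximization over $\mathbf v$ in \eqref{eq:J_FF_definition} therefore reduces to maximizing the scalar $|\sum_n [\mathbf s_{\mathrm R}]_n v_n|$ subject to $|v_n|=1$.

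By the triangle inequality, $|\sum_n [\mathbf s_{\mathrm R}]_n v_n|\le\sum_n|[\mathbf s_{\mathrm R}]_n|$. Equality is attained by the co-phasing choice $v_n=e^{-j\angle[\mathbf s_{\mathrm R}]_n}$, with an arbitrary phase when $[\mathbf s_{\mathrm R}]_n=0$, and this choice is feasible. Hence
\begin{equation*}
J_{\mathrm{FF}}=|c_{\mathrm{RB}}|^2\|\mathbf u_{\mathrm B}(\boldsymbol\Theta)\|^2\Bigl(\sum_n|[\mathbf s_{\mathrm R}(\boldsymbol\psi)]_n|\Bigr)^2 .
\end{equation*}
Multiplying and dividing by $N\sum_n|[\mathbf s_{\mathrm R}]_n|^2$ gives $N^2\bar p_{\mathrm{FF}}\beta_{\mathrm{FF}}$ with the definitions \eqref{eq:p_bar_FF_def} and \eqref{eq:beta_FF_def}. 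This step needs the minor caveat that $\mathbf s_{\mathrm R}\neq\mathbf 0$; in the degenerate case both sides are zero and we adopt any convention for $\beta_{\mathrm{FF}}$.

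Separability then follows directly. Write $J_{\mathrm{FF}}=B(\boldsymbol\Theta)\,I(\boldsymbol\psi)$ with
\begin{equation*}
B=|c_{\mathrm{RB}}|^2\|\mathbf u_{\mathrm B}\|^2,\qquad I=\Bigl(\sum_n|[\mathbf s_{\mathrm R}]_n|\Bigr)^2 .
\end{equation*}
Here $c_{\mathrm{RB}}$ depends only on $d_{\mathrm{RB}}$, $\mathbf u_{\mathrm B}$ depends only on $\boldsymbol\Theta$ through $\varepsilon^{\mathrm{BS}}_{m,0}$, and $\mathbf s_{\mathrm R}=\mathbf u_{\mathrm R}(\boldsymbol\psi)\odot\mathbf h_{\mathrm R}(\boldsymbol\psi)$ depends only on $\boldsymbol\psi$. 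Substituting into \eqref{eq:eta_all} gives
\begin{equation*}
\eta_{\mathrm{dual}}^{\mathrm{FF}}=\frac{B(\boldsymbol\Theta)I(\boldsymbol\psi)}{B(\boldsymbol\Theta_0)I(\boldsymbol\psi_0)}=\frac{B(\boldsymbol\Theta)}{B(\boldsymbol\Theta_0)}\cdot\frac{I(\boldsymbol\psi)}{I(\boldsymbol\psi_0)}=\eta_{\mathrm B}^{\mathrm{FF}}\eta_{\mathrm I}^{\mathrm{FF}},
\end{equation*}
provided that the baseline power is nonzero, which is implicit in the definition of the ratios.

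For the LoS claim, take $L_{\mathrm{IR}}=1$ in \eqref{eq:hkR_multipath}. Then $[\mathbf h_{\mathrm R}]_n=c_{\mathrm R,1}\sqrt{G^{\mathrm{inc}}_1(\boldsymbol\psi)}\,[\mathbf a_{\mathrm R,1}]_n$, and $[\mathbf a_{\mathrm R,1}]_n$ is a pure phase by \eqref{eq:IRS_LoS_steering}. Likewise $[\mathbf u_{\mathrm R}]_n$ has the $n$-independent modulus $\sqrt{G_{\mathrm e}(\varepsilon_0^{\mathrm{ref}})}$ by \eqref{eq:ff_uR_def}. Hence $|[\mathbf s_{\mathrm R}]_n|$ equals a constant $a$ for all $n$, and
\begin{equation*}
\beta_{\mathrm{FF}}=\frac{(Na)^2}{N\cdot Na^2}=1 .
\end{equation*}
More generally, the Cauchy--Schwarz inequality shows $\beta_{\mathrm{FF}}\le1$, with equality iff the moduli are equal.

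I expect no real obstacle. The only point needing care is the observation that the rank-one factorization decouples the $\mathbf v$-optimization from $\boldsymbol\Theta$, and that the far-field gain factors in $\mathbf u_{\mathrm R}$ are common to all elements; both follow immediately from the far-field definitions above.
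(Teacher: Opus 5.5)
Your proposal is correct and follows essentially the same route as the paper. Both arguments use the rank-one factorization of $\mathbf A_{\mathrm{FF}}$, the triangle inequality with co-phasing $v_n=e^{-j\angle[\mathbf s_{\mathrm R}]_n}$, multiplication and division by $N\sum_n|[\mathbf s_{\mathrm R}]_n|^2$, the product form $J_{\mathrm{FF}}=\alpha_{\mathrm B}(\boldsymbol\Theta)\alpha_{\mathrm I}(\boldsymbol\psi)$ for the gain ratios, and equal element moduli in the LoS case. Your nonzero-baseline caveat and the Cauchy--Schwarz remark $\beta_{\mathrm{FF}}\le 1$ are harmless additions to what the paper states.
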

\begin{proof}
	From~\eqref{eq:ff_A_rank_one}, we have
	\begin{equation}
		\label{eq:ff_Av_norm}
		\|\mathbf{A}_{\mathrm{FF}}\mathbf{v}\|^{2}
		=
		|c_{\mathrm{RB}}|^{2}
		\|\mathbf{u}_{\mathrm{B}}(\boldsymbol{\Theta})\|^{2}
		\left|
		\sum_{n=1}^{N}
		[\mathbf{s}_{\mathrm{R}}(\boldsymbol{\psi})]_{n}v_n
		\right|^{2}.
	\end{equation}
	Under the unit-modulus constraint $|v_n|=1$, the triangle inequality gives
	$|\sum_{n=1}^{N}[\mathbf{s}_{\mathrm{R}}]_{n}v_n|
	\leq\sum_{n=1}^{N}|[\mathbf{s}_{\mathrm{R}}]_{n}|$,
	with equality attained by the phase-aligned choice
	$v_n^{\star}=e^{-j\angle[\mathbf{s}_{\mathrm{R}}]_{n}}$.
	Substituting $v_n^{\star}$ into~\eqref{eq:ff_Av_norm} yields the closed-form expression
	\begin{equation}
		\label{eq:J_FF_closed_proof}
		J_{\mathrm{FF}}(\boldsymbol{\Theta},\boldsymbol{\psi})
		=
		|c_{\mathrm{RB}}|^{2}
		\|\mathbf{u}_{\mathrm{B}}(\boldsymbol{\Theta})\|^{2}
		\left(
		\sum_{n=1}^{N}
		|[\mathbf{s}_{\mathrm{R}}(\boldsymbol{\psi})]_{n}|
		\right)^{2}.
	\end{equation}
	For notational simplicity, we omit the arguments of 
	$\mathbf{u}_{\mathrm{B}}(\boldsymbol{\Theta})$ and 
	$\mathbf{s}_{\mathrm{R}}(\boldsymbol{\psi})$ when no confusion occurs.
	For $\mathbf{s}_{\mathrm{R}}\neq\mathbf{0}$, \eqref{eq:J_FF_closed_proof} can be rewritten as
	\begin{equation}
		\label{eq:J_FF_recast}
		\begin{aligned}
			\!\!\!\!	J_{\mathrm{FF}}
			&\!\!=
			|c_{\mathrm{RB}}|^{2}\|\mathbf{u}_{\mathrm{B}}\|^{2}
			\cdot
			N\!\sum_{n=1}^{N}|[\mathbf{s}_{\mathrm{R}}]_{n}|^{2}
			\cdot
			\underbrace{
				\frac{\left(\sum_{n=1}^{N}|[\mathbf{s}_{\mathrm{R}}]_{n}|\right)^{2}}
				{N\sum_{n=1}^{N}|[\mathbf{s}_{\mathrm{R}}]_{n}|^{2}}}_{=\,\beta_{\mathrm{FF}}(\boldsymbol{\Theta},\boldsymbol{\psi})}\\
			&\!\!=
			N^{2}\cdot
			\underbrace{\frac{|c_{\mathrm{RB}}|^{2}}{N}
				\|\mathbf{u}_{\mathrm{B}}\|^{2}\!\sum_{n=1}^{N}|[\mathbf{s}_{\mathrm{R}}]_{n}|^{2}}_{=\,\bar{p}_{\mathrm{FF}}(\boldsymbol{\Theta},\boldsymbol{\psi})}
			\cdot\beta_{\mathrm{FF}}(\boldsymbol{\Theta},\boldsymbol{\psi}),
		\end{aligned}
	\end{equation}
	This is exactly the decomposition in~\eqref{eq:J_FF_canonical}.
	Next, define
	\begin{equation}
		\!\!\!\!	\alpha_{\mathrm{B}}(\boldsymbol{\Theta})
		\!\!=\!\!
		|c_{\mathrm{RB}}|^{2}
		\|\mathbf{u}_{\mathrm{B}}(\boldsymbol{\Theta})\|^{2}, 
		\alpha_{\mathrm{I}}(\boldsymbol{\psi})
		\!\!=\!\!
		\left(
		\sum_{n=1}^{N}
		|[\mathbf{s}_{\mathrm{R}}(\boldsymbol{\psi})]_{n}|
		\right)^{2}.
	\end{equation}
	Then~\eqref{eq:J_FF_closed_proof} can be rewritten as
	\begin{equation}
		J_{\mathrm{FF}}(\boldsymbol{\Theta},\boldsymbol{\psi})
		=
		\alpha_{\mathrm{B}}(\boldsymbol{\Theta})
		\alpha_{\mathrm{I}}(\boldsymbol{\psi}),
	\end{equation}
	which shows the multiplicative separability between BS boresight steering and IRS orientation.
	Therefore, the BS-only and IRS-only rotation gain can be respectively expressed as
	\begin{align}
		\!\eta_{\mathrm{B}}^{\mathrm{FF}}
		&\!=\!
		\frac{
			J_{\mathrm{FF}}(\boldsymbol{\Theta},\boldsymbol{\psi}_{0})
		}{
			J_{\mathrm{FF}}(\boldsymbol{\Theta}_{0},\boldsymbol{\psi}_{0})
		}
		\!=\!
		\frac{
			{\alpha}_{\mathrm{B}}(\boldsymbol{\Theta})
		}{
			{\alpha}_{\mathrm{B}}(\boldsymbol{\Theta}_{0})
		},
		\label{eq:eta_B_FF_proof}
	\end{align}
	\begin{align}
		\!\eta_{\mathrm{I}}^{\mathrm{FF}}
		\!=\!
		\frac{
			J_{\mathrm{FF}}(\boldsymbol{\Theta}_{0},\boldsymbol{\psi})
		}{
			J_{\mathrm{FF}}(\boldsymbol{\Theta}_{0},\boldsymbol{\psi}_{0})
		}
		\!=\!
		\frac{
			{\alpha}_{\mathrm{I}}(\boldsymbol{\psi})
		}{
			{\alpha}_{\mathrm{I}}(\boldsymbol{\psi}_{0})
		}.
		\label{eq:eta_I_FF_proof}
	\end{align}
	For dual rotation, we have
	\begin{equation}
		\begin{aligned}
			\!\!	\eta_{\mathrm{dual}}^{\mathrm{FF}}
			=
			\frac{
				J_{\mathrm{FF}}(\boldsymbol{\Theta},\boldsymbol{\psi})
			}{
				J_{\mathrm{FF}}(\boldsymbol{\Theta}_{0},\boldsymbol{\psi}_{0})
			}=
			\frac{
				\alpha_{\mathrm{B}}(\boldsymbol{\Theta})
			}{
				\alpha_{\mathrm{B}}(\boldsymbol{\Theta}_{0})
			}
			\frac{
				\alpha_{\mathrm{I}}(\boldsymbol{\psi})
			}{
				\alpha_{\mathrm{I}}(\boldsymbol{\psi}_{0})
			}
			=
			\eta_{\mathrm{B}}^{\mathrm{FF}}
			\eta_{\mathrm{I}}^{\mathrm{FF}}.
		\end{aligned}
	\end{equation}
	
	For the LoS-dominant user-IRS channel, the far-field approximation gives identical magnitudes for all entries of $\mathbf{h}_{\mathrm R}(\boldsymbol{\psi})$ and $\mathbf{u}_{\mathrm R}(\boldsymbol{\psi})$.
	Hence, all entries of $\mathbf{s}_{\mathrm R}(\boldsymbol{\psi})$ have the same magnitude.
	Let $|[\mathbf{s}_{\mathrm R}(\boldsymbol{\psi})]_n|=s_0$ for all $n$.
	Then, from~\eqref{eq:beta_FF_def}, we have
	\begin{equation}
		\beta_{\mathrm{FF}}
		=
		\frac{(Ns_{0})^{2}}{N\cdot Ns_{0}^{2}}
		=
		1.
	\end{equation}
	This completes the proof.
\end{proof}
It is observed from Proposition~\ref{prop:ff_dual_gain} that, under the far-field, the optimized reflected channel power can be factorized into $\alpha_{\mathrm{B}}(\boldsymbol{\Theta})$ and  $\alpha_{\mathrm{I}}(\boldsymbol{\psi})$. 
Therefore, the far-field dual-rotation gain is exactly the product of the BS-only and IRS-only rotation gains, i.e.,
$\eta_{\mathrm{dual}}^{\mathrm{FF}}=\eta_{\mathrm B}^{\mathrm{FF}}\eta_{\mathrm I}^{\mathrm{FF}}$.
This implies that joint BS-boresight and IRS-orientation optimization leads to a multiplicative combination of the two independent single-rotation gains rather than a distinct coupling gain.
Moreover, when the user-IRS link is LoS-dominant, $\beta_{\mathrm{FF}}=1$, indicating that the IRS phase shifts can coherently combine the reflected components without coherent-combining loss.

The above separability, however, generally fails to hold in the near field. 
The reason is that the IRS-BS channel cannot be approximated by a common propagation direction across the IRS and BS arrays. 
Instead, the propagation distances and directions vary with the IRS elements and BS antennas, which couples the IRS orientation with the BS boresight steering.
To characterize this coupling, the optimized near-field reflected-channel power is analyzed.
From~\eqref{eq:HRB_entry} and~\eqref{eq:su_an_def}, we have
\begin{equation}
	[\mathbf{A}_{\mathrm{NF}}]_{m,n}
	=
	[\mathbf{H}_{\mathrm{RB}}(\boldsymbol{\Theta},\boldsymbol{\psi})]_{m,n}
	[\mathbf{h}_{\mathrm R}(\boldsymbol{\psi})]_n, 
\end{equation}
and
$
\mathbf a_n(\boldsymbol{\Theta},\boldsymbol{\psi})
=
[\mathbf A_{\mathrm{NF}}(\boldsymbol{\Theta},\boldsymbol{\psi})]_{:,n}
$.
Here, $\mathbf a_n(\boldsymbol{\Theta},\boldsymbol{\psi})$ depends on the distance $d_{n,m}(\boldsymbol{\psi})$, the IRS-side reflection angle $\varepsilon_{n,m}^{\mathrm{ref}}(\boldsymbol{\psi})$, and the BS-side receive angle $\varepsilon_{m,n}^{\mathrm{BS}}(\boldsymbol{\theta}_{m},\boldsymbol{\psi})$, which jointly involve IRS orientation and BS boresight steering. 
For fixed $\boldsymbol{\Theta}$ and $\boldsymbol{\psi}$, the optimized near-field reflected-channel power is defined as
\begin{equation}
	J_{\mathrm{NF}}(\boldsymbol{\Theta},\boldsymbol{\psi})
	\triangleq
	\max_{|v_n|=1}
	\left\|\mathbf{A}_{\mathrm{NF}}(\boldsymbol{\Theta},\boldsymbol{\psi})\mathbf{v}\right\|^{2},
\end{equation}
and the rotation gains $\eta_{\mathrm{B}}^{\mathrm{NF}}$, $\eta_{\mathrm{I}}^{\mathrm{NF}}$, and $\eta_{\mathrm{dual}}^{\mathrm{NF}}$ are obtained from~\eqref{eq:eta_all} by setting $\mathcal{X}=\mathrm{NF}$.

\begin{proposition} 
	\label{prop:nf_dual_coupling}
	Under the LoS-dominant user-IRS channel, define
	\begin{equation}
		\label{eq:p_bar_NF_def}
		\bar{p}_{\mathrm{NF}}(\boldsymbol{\Theta},\boldsymbol{\psi})
		\triangleq
		\frac{\kappa_{0}(\boldsymbol{\psi})}{N}
		\!\!\sum_{m=1}^{M}
		\!\sum_{n=1}^{N}\!\!
		\frac{
			G_{\mathrm{e}}(\varepsilon_{n,m}^{\mathrm{ref}}(\boldsymbol{\psi}))
			G_{\mathrm{e}}(\varepsilon_{m,n}^{\mathrm{BS}}(\boldsymbol{\theta}_{m},\boldsymbol{\psi}))
		}{
			d_{n,m}^{2}(\boldsymbol{\psi})
		}.
	\end{equation}
	Then the optimized near-field reflected-channel power satisfies
	\begin{equation}
		\label{eq:J_NF_power_bound}
		N\bar{p}_{\mathrm{NF}}(\boldsymbol{\Theta},\boldsymbol{\psi})
		\le
		J_{\mathrm{NF}}(\boldsymbol{\Theta},\boldsymbol{\psi})
		\le
		N^2\bar{p}_{\mathrm{NF}}(\boldsymbol{\Theta},\boldsymbol{\psi}) .
	\end{equation}
	Equivalently, the normalized near-field combining efficiency
	\begin{equation}
		\label{eq:beta_NF_def}
		\beta_{\mathrm{NF}}(\boldsymbol{\Theta},\boldsymbol{\psi})
		\triangleq
		\frac{
			J_{\mathrm{NF}}(\boldsymbol{\Theta},\boldsymbol{\psi})
		}{
			N^2\bar{p}_{\mathrm{NF}}(\boldsymbol{\Theta},\boldsymbol{\psi})
		}
	\end{equation}
	satisfies
	\begin{equation}
		\label{eq:beta_NF_range}
		\frac{1}{N}
		\le
		\beta_{\mathrm{NF}}(\boldsymbol{\Theta},\boldsymbol{\psi})
		\le
		1 .
	\end{equation}
	Consequently, the near-field dual-rotation gain can be written as
	\begin{equation}
		\label{eq:eta_dual_NF_Gp_Gbeta}
		\eta_{\mathrm{dual}}^{\mathrm{NF}}
		=
		G_{p}^{\mathrm{NF}}
		G_{\beta}^{\mathrm{NF}},
	\end{equation}
	where $
	G_{p}^{\mathrm{NF}}
	\triangleq
	\bar{p}_{\mathrm{NF}}(\boldsymbol{\Theta},\boldsymbol{\psi})
	/
	\bar{p}_{\mathrm{NF}}(\boldsymbol{\Theta}_{0},\boldsymbol{\psi}_{0})
	$,
	and
	$	G_{\beta}^{\mathrm{NF}}
	\triangleq
	\beta_{\mathrm{NF}}(\boldsymbol{\Theta},\boldsymbol{\psi})
	/
	\beta_{\mathrm{NF}}(\boldsymbol{\Theta}_{0},\boldsymbol{\psi}_{0})
	.
	$
\end{proposition}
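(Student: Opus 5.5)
The plan is to compute the Frobenius norm of the near-field cascaded matrix $\mathbf{A}_{\mathrm{NF}}(\boldsymbol{\Theta},\boldsymbol{\psi})$ explicitly and identify it with $N\bar{p}_{\mathrm{NF}}$. Under the LoS-dominant user-IRS channel, \eqref{eq:hkR_multipath} reduces to $[\mathbf{h}_{\mathrm R}(\boldsymbol{\psi})]_n=c_{\mathrm R,1}\sqrt{G_1^{\mathrm{inc}}(\boldsymbol{\psi})}\,[\mathbf a_{\mathrm R,1}(\boldsymbol{\psi})]_n$. The steering entry has unit modulus, so $|[\mathbf{h}_{\mathrm R}]_n|^2=|c_{\mathrm R,1}|^2G_1^{\mathrm{inc}}(\boldsymbol{\psi})$ is the same for every $n$. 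Combining this with \eqref{eq:HRB_entry}, I would set
$\kappa_0(\boldsymbol{\psi})=(\lambda/4\pi)^2|c_{\mathrm R,1}|^2G_1^{\mathrm{inc}}(\boldsymbol{\psi})$. This is the natural reading of the constant in \eqref{eq:p_bar_NF_def}, and I will state it explicitly. With this choice,
\begin{equation*}
\sum_{n=1}^{N}\|\mathbf a_n\|^2=\|\mathbf{A}_{\mathrm{NF}}\|_F^2=\kappa_0(\boldsymbol{\psi})\sum_{m=1}^{M}\sum_{n=1}^{N}\frac{G_{\mathrm e}(\varepsilon^{\mathrm{ref}}_{n,m})G_{\mathrm e}(\varepsilon^{\mathrm{BS}}_{m,n})}{d_{n,m}^2}=N\bar p_{\mathrm{NF}}(\boldsymbol{\Theta},\boldsymbol{\psi}).
\end{equation*}

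For the upper bound in \eqref{eq:J_NF_power_bound}, take any feasible $\mathbf v$ and write $\mathbf{A}_{\mathrm{NF}}\mathbf v=\sum_n v_n\mathbf a_n$. The triangle inequality with $|v_n|=1$ gives $\|\mathbf{A}_{\mathrm{NF}}\mathbf v\|\le\sum_n\|\mathbf a_n\|$. Cauchy–Schwarz then gives $(\sum_n\|\mathbf a_n\|)^2\le N\sum_n\|\mathbf a_n\|^2=N^2\bar p_{\mathrm{NF}}$. Maximizing over $\mathbf v$ yields $J_{\mathrm{NF}}\le N^2\bar p_{\mathrm{NF}}$. This mirrors the argument in the proof of Proposition~\ref{prop:ff_dual_gain}, except that the rank-one structure is no longer available.

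For the lower bound, I would use an averaging argument. Draw the phases of $v_n$ i.i.d. uniformly on $[0,2\pi)$. Then $\mathbb E[v_n v_i^*]=\delta_{ni}$, so
$\mathbb E\|\mathbf{A}_{\mathrm{NF}}\mathbf v\|^2=\sum_{n,i}\mathbb E[v_i^*v_n]\,\mathbf a_i^H\mathbf a_n=\sum_n\|\mathbf a_n\|^2=N\bar p_{\mathrm{NF}}$. Since the maximum over the unit-modulus set is at least the average, $J_{\mathrm{NF}}\ge N\bar p_{\mathrm{NF}}$. The same conclusion can be reached deterministically: the BCD update \eqref{eq:su_phase_update} never decreases the objective, and the objective's average over phase rotations equals the Frobenius norm. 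The random argument is cleaner, so I will use that.

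Dividing \eqref{eq:J_NF_power_bound} by $N^2\bar p_{\mathrm{NF}}$ gives \eqref{eq:beta_NF_range}, which requires $\bar p_{\mathrm{NF}}>0$. This holds whenever at least one element-antenna pair lies in both front hemispheres and the incidence gain is nonzero. I expect the only delicate point to be this degenerate case together with pinning down $\kappa_0$ consistently, since the algebra itself is routine. Finally, \eqref{eq:beta_NF_def} is an identity $J_{\mathrm{NF}}=N^2\bar p_{\mathrm{NF}}\beta_{\mathrm{NF}}$ at any $(\boldsymbol{\Theta},\boldsymbol{\psi})$. Taking the ratio of its values at $(\boldsymbol{\Theta},\boldsymbol{\psi})$ and at $(\boldsymbol{\Theta}_0,\boldsymbol{\psi}_0)$ in \eqref{eq:eta_all} cancels $N^2$ and yields \eqref{eq:eta_dual_NF_Gp_Gbeta}. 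In contrast to the far-field case, neither factor splits into a $\boldsymbol{\Theta}$-only part times a $\boldsymbol{\psi}$-only part, because $d_{n,m}$ and both gain angles depend jointly on $\boldsymbol{\theta}_m$ and $\boldsymbol{\psi}$.
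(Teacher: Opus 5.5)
Your proposal is correct and follows the paper's proof essentially step for step: the Frobenius-norm identity $\sum_n\|\mathbf a_n\|^2=N\bar p_{\mathrm{NF}}$, the triangle inequality plus Cauchy--Schwarz for the upper bound, the i.i.d.\ uniform-phase averaging argument for the lower bound, and the ratio of $J_{\mathrm{NF}}=N^2\bar p_{\mathrm{NF}}\beta_{\mathrm{NF}}$ at the two configurations. Your explicit choice $\kappa_0(\boldsymbol{\psi})=(\lambda/4\pi)^2|c_{\mathrm R,1}|^2G_1^{\mathrm{inc}}(\boldsymbol{\psi})$ and your remark that the division step requires $\bar p_{\mathrm{NF}}>0$ address two small points the paper leaves implicit.
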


\begin{proof}
	For compact notation, the arguments
	$(\boldsymbol{\Theta},\boldsymbol{\psi})$ are omitted when no confusion occurs.
	Under the LoS-dominant user-IRS channel, we have
		$	\mathbf h_{\mathrm R}(\boldsymbol{\psi})
		=
		c_{\mathrm R,1}
		\sqrt{G_{1}^{\mathrm{inc}}(\boldsymbol{\psi})}
		\mathbf a_{\mathrm R,1}(\boldsymbol{\psi})$.
	Since each entry of $\mathbf a_{\mathrm R,1}(\boldsymbol{\psi})$ has unit modulus,
	\begin{equation}
		\left|
		[\mathbf h_{\mathrm R}(\boldsymbol{\psi})]_n
		\right|^2
		=
		|c_{\mathrm R,1}|^2
		G_{1}^{\mathrm{inc}}(\boldsymbol{\psi}) ,
		\quad n=1,\ldots,N .
	\end{equation}
	Using
	$[\mathbf A_{\mathrm{NF}}]_{m,n}
	=
	[\mathbf H_{\mathrm{RB}}]_{m,n}
	[\mathbf h_{\mathrm R}]_n$,
	we obtain
	\begin{equation}
		\begin{aligned}
			\label{eq:NF_column_power_expanded}
			\!\!\! \sum_{n=1}^{N}\!\!
			\left\|
			\mathbf a_n
			\right\|^{2}\!
			&\!\!=\!\!
			\kappa_{0}(\boldsymbol{\psi})
			\!\!\sum_{m=1}^{M}
			\!\sum_{n=1}^{N}\!\!
			\frac{
				G_{\mathrm{e}}(\varepsilon_{n,m}^{\mathrm{ref}}(\boldsymbol{\psi}))
				G_{\mathrm{e}}(\varepsilon_{m,n}^{\mathrm{BS}}(\boldsymbol{\theta}_{m},\boldsymbol{\psi}))
			}{
				d_{n,m}^{2}(\boldsymbol{\psi})
			}\\
			&	=
			N
			\bar{p}_{\mathrm{NF}}(\boldsymbol{\Theta},\boldsymbol{\psi}).
		\end{aligned}
	\end{equation}
	For any feasible $\mathbf v$, we have
		$	\mathbf A_{\mathrm{NF}}\mathbf v
		=
		\sum_{n=1}^{N}
		v_n\mathbf a_n$.
	The triangle inequality gives
		$	\left\|
		\mathbf A_{\mathrm{NF}}\mathbf v
		\right\|
		\le
		\sum_{n=1}^{N}
		\left\|
		\mathbf a_n
		\right\|$.
	Maximizing over all feasible $\mathbf v$ yields
	\begin{equation}
		J_{\mathrm{NF}}(\boldsymbol{\Theta},\boldsymbol{\psi})
		\le
		\left(
		\sum_{n=1}^{N}
		\left\|
		\mathbf a_n
		\right\|
		\right)^2 .
	\end{equation}
	By the Cauchy--Schwarz inequality,
	\begin{equation}
		\left(
		\sum_{n=1}^{N}
		\left\|
		\mathbf a_n
		\right\|
		\right)^2
		\le
		N
		\sum_{n=1}^{N}
		\left\|
		\mathbf a_n
		\right\|^2
		=
		N^2
		\bar{p}_{\mathrm{NF}}(\boldsymbol{\Theta},\boldsymbol{\psi}) .
	\end{equation}
	To obtain a lower bound, choose
	$v_n=e^{j\varphi_n}$, where $\varphi_n$ are independent and uniformly distributed over $[0,2\pi)$.
	Then $\mathbb E[v_i^{*}v_n]=0$ for $i\neq n$, and
	\begin{equation}
		\begin{aligned}
			\mathbb E
			\left[
			\left\|
			\sum_{n=1}^{N}
			v_n\mathbf a_n
			\right\|^2
			\right]
			&=
			\sum_{n=1}^{N}
			\left\|
			\mathbf a_n
			\right\|^2
			=
			N
			\bar{p}_{\mathrm{NF}}(\boldsymbol{\Theta},\boldsymbol{\psi}) .
		\end{aligned}
	\end{equation}
	Therefore, there exists a feasible phase vector satisfying
	\begin{equation}
		\left\|
		\sum_{n=1}^{N}
		v_n\mathbf a_n
		\right\|^2
		\ge
		N
		\bar{p}_{\mathrm{NF}}(\boldsymbol{\Theta},\boldsymbol{\psi}) .
	\end{equation}
	Since $J_{\mathrm{NF}}(\boldsymbol{\Theta},\boldsymbol{\psi})$ is the maximum over all feasible phase vectors, we have
		$	J_{\mathrm{NF}}(\boldsymbol{\Theta},\boldsymbol{\psi})
		\ge
		N
		\bar{p}_{\mathrm{NF}}(\boldsymbol{\Theta},\boldsymbol{\psi})$.
	Combining the lower and upper bounds proves~\eqref{eq:J_NF_power_bound}.
	Dividing~\eqref{eq:J_NF_power_bound} by
	$N^2\bar{p}_{\mathrm{NF}}(\boldsymbol{\Theta},\boldsymbol{\psi})$
	gives~\eqref{eq:beta_NF_range}.
	Furthermore, from~\eqref{eq:beta_NF_def}, we have
	\begin{equation}
		J_{\mathrm{NF}}(\boldsymbol{\Theta},\boldsymbol{\psi})
		=
		N^{2}
		\bar{p}_{\mathrm{NF}}(\boldsymbol{\Theta},\boldsymbol{\psi})
		\beta_{\mathrm{NF}}(\boldsymbol{\Theta},\boldsymbol{\psi}) .
	\end{equation}
	Taking the ratio between
	$(\boldsymbol{\Theta},\boldsymbol{\psi})$ and
	$(\boldsymbol{\Theta}_{0},\boldsymbol{\psi}_{0})$ yields
	\begin{equation}
		\begin{aligned}
			\eta_{\mathrm{dual}}^{\mathrm{NF}}
			=
			\frac{
				J_{\mathrm{NF}}(\boldsymbol{\Theta},\boldsymbol{\psi})
			}{
				J_{\mathrm{NF}}(\boldsymbol{\Theta}_{0},\boldsymbol{\psi}_{0})
			}
			=
			\frac{
				\bar{p}_{\mathrm{NF}}(\boldsymbol{\Theta},\boldsymbol{\psi})
			}{
				\bar{p}_{\mathrm{NF}}(\boldsymbol{\Theta}_{0},\boldsymbol{\psi}_{0})
			}
			\frac{
				\beta_{\mathrm{NF}}(\boldsymbol{\Theta},\boldsymbol{\psi})
			}{
				\beta_{\mathrm{NF}}(\boldsymbol{\Theta}_{0},\boldsymbol{\psi}_{0})
			} .
		\end{aligned}
	\end{equation}
	This completes the proof.
\end{proof}

Proposition~\ref{prop:nf_dual_coupling} shows that the near-field dual-rotation gain is decomposed into the average column-power gain $G_{p}^{\mathrm{NF}}$ of the near-field reflected channel matrix and the coherent-combining gain $G_{\beta}^{\mathrm{NF}}$. 
Since both of these factors depend on $\boldsymbol{\Theta}$ and $\boldsymbol{\psi}$, we cannot derive the near-field dual rotation gain from the rotation gains of the BS and the IRS alone, which motivates the coordinated design of BS boresight steering and IRS panel orientation.
Moreover, the bound $\beta_{\mathrm{NF}}\in[1/N,1]$ indicates that the IRS phase shifts may not always achieve fully coherent combining in the near-field regime. 
When the IRS-BS propagation directions are nearly identical, $\beta_{\mathrm{NF}}$ approaches one and the conventional $N^{2}$ power scaling can be recovered. 
In contrast, when the near-field direction variation becomes pronounced, $\beta_{\mathrm{NF}}$ may decrease and the reflected-channel power may degrade toward linear scaling with $N$.
To characterize the geometry-dependent near-field direction variation, we analyze the impact of the IRS aperture-to-distance ratio on the near-field rotation behavior.
Define
$\rho_{n,m}(\boldsymbol{\psi})=\mathbf{n}(\boldsymbol{\psi})^{T}\hat{\mathbf{d}}_{n,m}(\boldsymbol{\psi})$
as the alignment between the IRS normal and the IRS element-to-BS antenna direction.
The alignment variation is
\begin{equation}
	\label{eq:Delta_def}
	\!\!\!\!	\Delta(\boldsymbol{\psi})
	\!\!	=\!\!
	\left[
	\frac{1}{MN}
	\sum\nolimits_{m=1}^{M}
	\sum\nolimits_{n=1}^{N}
	\left(
	\rho_{n,m}(\boldsymbol{\psi})
	-
	\bar{\rho}(\boldsymbol{\psi})
	\right)^{2}
	\right]^{1/2},
\end{equation}
where
$
\bar{\rho}(\boldsymbol{\psi})=(MN)^{-1}\sum_{m=1}^{M}\sum_{n=1}^{N}\rho_{n,m}(\boldsymbol{\psi}).
$
Let $\xi=D_{\mathrm{R}}/d_{\mathrm{RB}}$ denote the normalized IRS aperture-to-distance ratio.
The aggregate IRS-side reflection gain is
$		G(\boldsymbol{\psi})
=
G_{0,\mathrm{IRS}}
\sum_{m=1}^{M}
\sum_{n=1}^{N}
\rho_{n,m}^{2p_{\mathrm{IRS}}}(\boldsymbol{\psi})$.
For numerical validation, the IRS-side reflection-gain variance is defined as
\begin{equation}
\mathcal{S}(\boldsymbol{\psi})
=
(MN)^{-1}
\sum_{m=1}^{M}
\sum_{n=1}^{N}
(
G_{n,m}^{\mathrm{ref}}(\boldsymbol{\psi})
-
\bar{G}^{\mathrm{ref}}(\boldsymbol{\psi})
)^{2},
\end{equation}
where
$\bar{G}^{\mathrm{ref}}(\boldsymbol{\psi})
=
(MN)^{-1}\sum_{m=1}^{M}\sum_{n=1}^{N}
G_{n,m}^{\mathrm{ref}}(\boldsymbol{\psi})$.

\begin{proposition}
\label{prop:xi_tradeoff}
Assume $0<\rho_{n,m}(\boldsymbol{\psi})\leq 1$ and $p_{\mathrm{IRS}}\geq1/2$.
Under the compact-BS condition $D_{\mathrm{B}}\ll D_{\mathrm{R}}$, where $D_{\mathrm{B}}$ denotes the BS aperture diagonal, the alignment variation satisfies
\begin{equation}
	\label{eq:Delta_xi_bound}
	\Delta(\boldsymbol{\psi})
	\leq
	C_{\Delta}(\boldsymbol{\psi})\xi
	+
	O(\xi^{2}),
\end{equation}
where $C_{\Delta}(\boldsymbol{\psi})$ is independent of $\xi$ for fixed geometry.
Moreover, the aggregate IRS-side reflection gain satisfies
\begin{equation}
	\label{eq:gain_variation_upper}
	\!\!	G(\boldsymbol{\psi})
	\leq
	MNG_{0,\mathrm{IRS}}\bar{\rho}(\boldsymbol{\psi})
	\leq
	MNG_{0,\mathrm{IRS}}
	\left(
	1-\Delta^{2}(\boldsymbol{\psi})
	\right).
\end{equation}
\end{proposition}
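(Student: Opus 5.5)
The proof has two independent parts: a first-order perturbation expansion of the alignments $\rho_{n,m}(\boldsymbol{\psi})$ around the center direction $\hat{\mathbf d}_0=(\mathbf b_0-\mathbf r_0)/d_{\mathrm{RB}}$ for \eqref{eq:Delta_xi_bound}, and elementary moment inequalities on $\rho_{n,m}\in(0,1]$ for \eqref{eq:gain_variation_upper}.

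\emph{Alignment variation.} I would first use that the empirical variance is minimized by the mean, so that for any constant $c$,
\[
\Delta(\boldsymbol{\psi})\le\max_{m,n}|\rho_{n,m}(\boldsymbol{\psi})-c|.
\]
I take $c=\rho_0(\boldsymbol{\psi})\triangleq\mathbf n(\boldsymbol{\psi})^T\hat{\mathbf d}_0$. Next, write
\[
\mathbf b_m-\mathbf r_n(\boldsymbol{\psi})=d_{\mathrm{RB}}\hat{\mathbf d}_0+\mathbf e_{n,m},\qquad \mathbf e_{n,m}=\bar{\mathbf b}_m-\mathbf R(\boldsymbol{\psi})\bar{\mathbf r}_n .
\]
Since $\mathbf R$ is orthogonal, $\|\bar{\mathbf r}_n\|\le D_{\mathrm R}/2$ and $\|\bar{\mathbf b}_m\|\le D_{\mathrm B}/2$. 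Hence
\[
\frac{\|\mathbf e_{n,m}\|}{d_{\mathrm{RB}}}\le\frac{\xi}{2}\Bigl(1+\frac{D_{\mathrm B}}{D_{\mathrm R}}\Bigr),
\]
and the right-hand side is $\approx\xi/2$ under the compact-BS condition.

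Normalizing $\mathbf x\mapsto\mathbf x/\|\mathbf x\|$ and expanding to first order about $\hat{\mathbf d}_0$ gives
\[
\hat{\mathbf d}_{n,m}=\hat{\mathbf d}_0+\frac{(\mathbf I_3-\hat{\mathbf d}_0\hat{\mathbf d}_0^T)\mathbf e_{n,m}}{d_{\mathrm{RB}}}+O(\xi^2).
\]
Projecting onto $\mathbf n(\boldsymbol{\psi})$ and applying Cauchy--Schwarz yields
\[
|\rho_{n,m}-\rho_0|\le\bigl\|(\mathbf I_3-\hat{\mathbf d}_0\hat{\mathbf d}_0^T)\mathbf n(\boldsymbol{\psi})\bigr\|\,\frac{\xi}{2}\Bigl(1+\frac{D_{\mathrm B}}{D_{\mathrm R}}\Bigr)+O(\xi^2).
\]
This gives \eqref{eq:Delta_xi_bound} with
\[
C_\Delta(\boldsymbol{\psi})=\tfrac12\bigl(1+D_{\mathrm B}/D_{\mathrm R}\bigr)\sin\varepsilon_0^{\mathrm{ref}}(\boldsymbol{\psi}).
\]
This constant depends only on the orientation and the aperture shape, not on $\xi$ at fixed geometry. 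It also shows that the variation vanishes to first order when the normal is aligned with $\hat{\mathbf d}_0$.

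The main obstacle is making the $O(\xi^2)$ remainder uniform over all pairs $(m,n)$. I would handle it with the explicit Taylor remainder of the normalization map on the ball $\|\mathbf e\|\le d_{\mathrm{RB}}\xi$, say for $\xi\le1$. There the Hessian of $\mathbf x/\|\mathbf x\|$ is bounded by a constant times $d_{\mathrm{RB}}^{-2}$, so the remainder is at most $C'\xi^2$ with $C'$ independent of $(m,n)$ and $\boldsymbol{\psi}$.

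\emph{Reflection-gain bound.} Since $0<\rho_{n,m}\le1$ and $2p_{\mathrm{IRS}}\ge1$, we have $\rho_{n,m}^{2p_{\mathrm{IRS}}}\le\rho_{n,m}$. Summing gives
\[
G(\boldsymbol{\psi})\le G_{0,\mathrm{IRS}}\sum_{m,n}\rho_{n,m}=MNG_{0,\mathrm{IRS}}\bar\rho(\boldsymbol{\psi}),
\]
which is the first inequality. For the second, expand $\Delta^2=\overline{\rho^2}-\bar\rho^2$, where $\overline{\rho^2}$ is the empirical mean of $\rho_{n,m}^2$. Using $\rho_{n,m}^2\le\rho_{n,m}$ gives $\overline{\rho^2}\le\bar\rho$. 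Therefore
\[
\Delta^2\le\bar\rho(1-\bar\rho)\le1-\bar\rho,
\]
where the last step uses $0<\bar\rho\le1$. This is $\bar\rho\le1-\Delta^2$, and multiplying by $MNG_{0,\mathrm{IRS}}$ completes \eqref{eq:gain_variation_upper}.
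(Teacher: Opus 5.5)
Your proposal is correct and follows essentially the same route as the paper: a first-order expansion of $\hat{\mathbf d}_{n,m}(\boldsymbol{\psi})$ about $\hat{\mathbf d}_0$, projected onto $\mathbf n(\boldsymbol{\psi})$, for \eqref{eq:Delta_xi_bound}, and the same moment argument ($\rho_{n,m}^{2p_{\mathrm{IRS}}}\le\rho_{n,m}$ and $\Delta^2=\overline{\rho^2}-\bar\rho^2\le\bar\rho(1-\bar\rho)\le 1-\bar\rho$) for \eqref{eq:gain_variation_upper}. The differences are small: you center at $\rho_0=\mathbf n^T\hat{\mathbf d}_0$ using the fact that the mean minimizes the empirical variance, whereas the paper subtracts the averaged expansion; this gives you the explicit constant $C_\Delta=\tfrac12(1+D_{\mathrm B}/D_{\mathrm R})\sin\varepsilon_0^{\mathrm{ref}}(\boldsymbol{\psi})$, while the paper leaves $C_1,C_2$ unspecified. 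One fix is needed in your uniform-remainder argument: the ball $\|\mathbf e\|\le d_{\mathrm{RB}}\xi$ with $\xi=1$ reaches the origin, where the Hessian of $\mathbf x/\|\mathbf x\|$ is unbounded, so use the actual radius $\tfrac{\xi}{2}(1+D_{\mathrm B}/D_{\mathrm R})d_{\mathrm{RB}}$ or restrict to $\xi\le\xi_0<1$.
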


\begin{proof}
Let $\hat{\mathbf d}_{0}=(\mathbf b_{0}-\mathbf r_{0})/d_{\mathrm{RB}}$,
$\epsilon_{\mathrm{ap}}=(D_{\mathrm B}+D_{\mathrm R})/d_{\mathrm{RB}}$, and
$
\boldsymbol{\delta}_{m,n}(\boldsymbol{\psi})
=
\bar{\mathbf b}_{m}
-
\mathbf R(\boldsymbol{\psi})\bar{\mathbf r}_{n}.
$
A first-order expansion of the unit direction gives
\begin{equation}
	\label{eq:dnm_taylor}
	\hat{\mathbf d}_{n,m}(\boldsymbol{\psi})
	=
	\hat{\mathbf d}_{0}
	+
	\frac{
		(\mathbf I-\hat{\mathbf d}_{0}\hat{\mathbf d}_{0}^{T})
		\boldsymbol{\delta}_{m,n}(\boldsymbol{\psi})
	}{
		d_{\mathrm{RB}}
	}
	+
	O(\epsilon_{\mathrm{ap}}^{2}).
\end{equation}
Define
$\mathbf q(\boldsymbol{\psi})
=
(\mathbf I-\hat{\mathbf d}_{0}\hat{\mathbf d}_{0}^{T})
\mathbf n(\boldsymbol{\psi})$.
Using
$\rho_{n,m}(\boldsymbol{\psi})
=
\mathbf n^{T}(\boldsymbol{\psi})
\hat{\mathbf d}_{n,m}(\boldsymbol{\psi})$
and averaging over all $(n,m)$, we obtain
\begin{equation}
	\label{eq:rho_centered_expansion}
	\!\!	\rho_{n,m}(\boldsymbol{\psi})
	\!\!	-	\!\!
	\bar{\rho}(\boldsymbol{\psi})
	\!\!	=	\!\!
	\frac{
		\mathbf q^{T}(\boldsymbol{\psi})
		\big(
		\boldsymbol{\delta}_{m,n}(\boldsymbol{\psi})
		-
		\bar{\boldsymbol{\delta}}(\boldsymbol{\psi})
		\big)
	}{
		d_{\mathrm{RB}}
	}
	+
	O(\epsilon_{\mathrm{ap}}^{2}),
\end{equation}
where
$\bar{\boldsymbol{\delta}}(\boldsymbol{\psi})
=
(MN)^{-1}
\sum_{m=1}^{M}
\sum_{n=1}^{N}
\boldsymbol{\delta}_{m,n}(\boldsymbol{\psi})$.
Since $\|\mathbf q(\boldsymbol{\psi})\|\leq1$,
$\|\bar{\mathbf b}_{m}\|\leq D_{\mathrm B}/2$, and
$\|\mathbf R(\boldsymbol{\psi})\bar{\mathbf r}_{n}\|\leq D_{\mathrm R}/2$,
\eqref{eq:rho_centered_expansion} and~\eqref{eq:Delta_def} yield
\begin{equation}
	\label{eq:Delta_pre_bound}
	\Delta(\boldsymbol{\psi})
	\leq
	C_{1}(\boldsymbol{\psi})
	\frac{D_{\mathrm R}}{d_{\mathrm{RB}}}
	+
	C_{2}(\boldsymbol{\psi})
	\frac{D_{\mathrm B}}{d_{\mathrm{RB}}}
	+
	O(\epsilon_{\mathrm{ap}}^{2}),
\end{equation}
where $C_{1}(\boldsymbol{\psi})$ and $C_{2}(\boldsymbol{\psi})$ are bounded geometry-dependent constants.
Using $\xi=D_{\mathrm R}/d_{\mathrm{RB}}$, the first-order terms in~\eqref{eq:Delta_pre_bound} become
$C_{1}(\boldsymbol{\psi})\xi
+
C_{2}(\boldsymbol{\psi})(D_{\mathrm B}/D_{\mathrm R})\xi$.
Under $D_{\mathrm B}\ll D_{\mathrm R}$,
$\epsilon_{\mathrm{ap}}=(1+D_{\mathrm B}/D_{\mathrm R})\xi=O(\xi)$.
Hence, 
\begin{equation}
	\Delta(\boldsymbol{\psi})
	\leq
	C_{\Delta}(\boldsymbol{\psi})\xi
	+
	O(\xi^{2}),
\end{equation}
where $
C_{\Delta}(\boldsymbol{\psi})
=
C_{1}(\boldsymbol{\psi})
+
C_{2}(\boldsymbol{\psi})
\frac{D_{\mathrm B}}{D_{\mathrm R}}.
$
This proves~\eqref{eq:Delta_xi_bound}.

Since $0<\rho_{n,m}(\boldsymbol{\psi})\leq1$ and
$p_{\mathrm{IRS}}\geq1/2$, we have
$\rho_{n,m}^{2p_{\mathrm{IRS}}}(\boldsymbol{\psi})
\leq
\rho_{n,m}(\boldsymbol{\psi})$.
Thus,
\begin{equation}
	\label{eq:G_rho_bound}
	\!\!	G(\boldsymbol{\psi})
	\!	=\!
	G_{0,\mathrm{IRS}}
	\sum_{m=1}^{M}
	\sum_{n=1}^{N}
	\rho_{n,m}^{2p_{\mathrm{IRS}}}(\boldsymbol{\psi})
	\!\leq\!
	MNG_{0,\mathrm{IRS}}
	\bar{\rho}(\boldsymbol{\psi}).
\end{equation}
Expanding the squared term in~\eqref{eq:Delta_def} gives
\begin{equation}
	\Delta^{2}(\boldsymbol{\psi})
	=
	\overline{\rho^{2}}(\boldsymbol{\psi})
	-
	\bar{\rho}^{2}(\boldsymbol{\psi}),
\end{equation}
where
$
\overline{\rho^{2}}(\boldsymbol{\psi})
=
(MN)^{-1}
\sum_{m=1}^{M}
\sum_{n=1}^{N}
\rho_{n,m}^{2}(\boldsymbol{\psi})$.
Since $0<\rho_{n,m}(\boldsymbol{\psi})\leq1$ for all $(n,m)$,
we have
$\rho_{n,m}^{2}(\boldsymbol{\psi})
\leq
\rho_{n,m}(\boldsymbol{\psi})$,
which implies
$\overline{\rho^{2}}(\boldsymbol{\psi})
\leq
\bar{\rho}(\boldsymbol{\psi})$.
It follows that
$
\Delta^{2}(\boldsymbol{\psi})
\leq
\bar{\rho}(\boldsymbol{\psi})
\left(
1-\bar{\rho}(\boldsymbol{\psi})
\right)
\leq
1-\bar{\rho}(\boldsymbol{\psi}).
$
Therefore,
$\bar{\rho}(\boldsymbol{\psi})
\leq
1-\Delta^{2}(\boldsymbol{\psi})$.
Combining this inequality with~\eqref{eq:G_rho_bound} proves~\eqref{eq:gain_variation_upper}.
This completes the proof.		
\end{proof}

This result indicates that the aperture-to-distance ratio $\xi$ captures near-field direction variation in the IRS-BS link.
When $\xi\ll 1$, the IRS aperture is small relative to the IRS-BS distance, and
the IRS-BS propagation directions become nearly identical.
In this case, the far-field separability in Proposition~\ref{prop:ff_dual_gain}
approximately holds.
For a larger $\xi$, the bound in Proposition~\ref{prop:xi_tradeoff} allows
stronger spatial variation of the IRS-BS propagation directions across the IRS
and BS arrays.
Meanwhile, the IRS gain is upper-bounded by
$MNG_{0,\mathrm{IRS}}(1-\Delta^{2}(\boldsymbol{\psi}))$, indicating that
stronger direction variation may reduce the achievable average reflection gain.
Therefore, $\xi$ can be regarded as a useful geometric indicator for assessing
the potential strength of near-field rotation coupling.

\section{Multi-User System}
\label{sec:multi_user}

In this section, we consider the general multi-user cases.
Specifically, we solve~(P1) by developing an AO-based algorithm for sum-rate maximization.

\subsection{Receive Beamforming Update}
\label{subsec:mu_mmse}

For fixed $\boldsymbol{\Theta}$, $\boldsymbol{\psi}$, and $\mathbf{v}$, the receive beamforming subproblem is separable over users.
Since the SINR in~\eqref{eq:sinr} is invariant to any nonzero scaling of $\mathbf{w}_k$, the optimal receive combiner is given by~\cite{hu2025ma_uplink_tvt}
\begin{equation}
\label{eq:mmse_combiner}
\mathbf{w}_k^{\star}
=
\frac{
	\mathbf{C}_k^{-1}
	\mathbf{h}_k(\boldsymbol{\Theta},\boldsymbol{\psi},\mathbf{v})
}{
	\left\|
	\mathbf{C}_k^{-1}
	\mathbf{h}_k(\boldsymbol{\Theta},\boldsymbol{\psi},\mathbf{v})
	\right\|
},
\end{equation}
where
$\mathbf{C}_k
\triangleq
\sum_{j\neq k}
P_j\mathbf{h}_j\mathbf{h}_j^{H}
+
\sigma^{2}\mathbf{I}_M$
denotes the interference-plus-noise covariance matrix for user~$k$.

\subsection{IRS Phase-Shift Update}
\label{subsec:mu_phase}
For fixed $\mathbf{W}$, $\boldsymbol{\Theta}$, and $\boldsymbol{\psi}$, the phase-shift subproblem is non-convex due to unit-modulus constraints.
Unlike the single-user case, the IRS phase vector affects all users, and the coordinate-wise update in~\eqref{eq:su_phase_update} is not directly applicable.
We therefore update the IRS phase shifts by an RCG method based on the FP transformations~\cite{shen2018fractional}.

For brevity, define
$u_{k,j} \triangleq \sqrt{P_j}\mathbf{w}_k^H\mathbf{h}_j$
and
$\varrho_k \triangleq \sum_{j=1}^{K}|u_{k,j}|^2 + \sigma^2\|\mathbf{w}_k\|^2$.
Then, $|u_{k,k}|^2$, $\sum_{j\neq k}|u_{k,j}|^2$, and $\varrho_k$ denote the desired signal, interference, and total received powers of user $k$, respectively.	
At FP iteration~$i$, given $\mathbf v^{(i-1)}$, the auxiliary variables are updated as $\mu_k^{(i)}=\gamma_k^{(i-1)}$ and $\nu_k^{(i)}	= 	\sqrt{1+\mu_k^{(i)}}\,u_{k,k}^{(i-1)}/\varrho_k^{(i-1)}$. 
For given $\{\mu_k^{(i)},\nu_k^{(i)}\}_{k=1}^{K}$, rewrite the composite channel in~\eqref{eq:composite_channel} as
\begin{equation}
\label{eq:composite_channel_decomp}
\mathbf{h}_j
=
\mathbf{h}_{\mathrm{B},j}
+
\mathbf{G}_j\mathbf{v},
\end{equation}
where
$\mathbf{G}_j
\triangleq
\mathbf{H}_{\mathrm{RB}}\mathrm{diag}(\mathbf{h}_{\mathrm{R},j})
\in\mathbb{C}^{M\times N}$.
Substituting~\eqref{eq:composite_channel_decomp} into the FP-transformed objective and removing the terms independent of $\mathbf{v}$, the phase-shift subproblem at FP iteration~$i$ can be written as
\begin{subequations}
\label{eq:mu_qcqp}
\begin{align}
	(\mathrm{P3}):\ \max_{\mathbf{v}}\ &
	\tilde{R}(\mathbf{v})
	\triangleq
	\mathbf{v}^H \mathbf{Q}^{(i)} \mathbf{v} +
	2\Re\{(\mathbf{q}^{(i)})^H \mathbf{v}\}
	\label{eq:P3_obj}\\
	\text{s.t.}\ & |v_n|=1,\ \forall n\in\mathcal{N},
	\label{eq:P3_const}
\end{align}
\end{subequations}
where the Hermitian matrix $\mathbf{Q}^{(i)} \in \mathbb{C}^{N \times N}$ is given by
\begin{equation}
\label{eq:mu_Q_matrix}
\mathbf{Q}^{(i)} = -\sum\nolimits_{k=1}^{K}
\frac{|\nu_k^{(i)}|^2}{\ln 2}
\sum\nolimits_{j=1}^{K} P_j\, \mathbf{G}_j^H \mathbf{w}_k \mathbf{w}_k^H \mathbf{G}_j.
\end{equation}
Define
$\mathbf{c}_k^{(i)}
\triangleq
\sqrt{(1+\mu_k^{(i)})P_k}\mathbf{G}_k^{H}\mathbf{w}_k\in\mathbb{C}^{N}$,
$\mathbf{d}_k
\triangleq
\sum_{j=1}^{K}P_j b_{k,j}\mathbf{G}_j^{H}\mathbf{w}_k\in\mathbb{C}^{N}$,
and
$b_{k,j}\triangleq\mathbf{w}_k^{H}\mathbf{h}_{\mathrm{B},j}$.
Then, $\mathbf{q}^{(i)}\in\mathbb{C}^{N}$ is given by
\begin{equation}
\label{eq:mu_q_vector}
\mathbf{q}^{(i)}
=
\sum\nolimits_{k=1}^{K}
\frac{\nu_k^{(i)}}{\ln 2}
\left[
\mathbf{c}_k^{(i)}
-
(\nu_k^{(i)})^{*}\mathbf{d}_k
\right].
\end{equation}
Since $\mathbf{Q}^{(i)}\preceq\mathbf{0}$, $\tilde{R}(\mathbf{v})$ is concave in the ambient Euclidean space, while problem~(P3) remains non-convex due to the unit-modulus constraints.
We update $\mathbf{v}$ by RCG on the complex-circle product manifold~\cite{absil2008optimization}, using the Euclidean gradient
\begin{equation}
\label{eq:FP_Euclidean_grad}
\nabla \tilde{R}(\mathbf{v}) = 2\mathbf{Q}^{(i)}\mathbf{v} + 2\mathbf{q}^{(i)}.
\end{equation}
The Riemannian gradient $\mathrm{grad}\,\tilde{R}(\mathbf{v})$ is obtained by projecting~\eqref{eq:FP_Euclidean_grad} onto the tangent space of the complex-circle manifold.

\subsection{Rotation Update}
\label{subsec:mu_rotation}
For fixed $\mathbf W$ and $\mathbf v$, the rotation subproblem remains non-convex due to the nonlinear dependence of the composite channels on $\boldsymbol{\Theta}$ and $\boldsymbol{\psi}$.
As in the single-user case, projected-gradient methods are used for the rotation update.
For the BS-side update, the equivalent boresight vector $\mathbf f_m=\mathbf f_m(\boldsymbol{\theta}_m)$ in~\eqref{eq:boresight_vector} is used instead of the angular variable $\boldsymbol{\theta}_m$.
The two gradients are related by
\begin{equation}
\label{eq:mu_theta_f_relation}
\nabla_{\boldsymbol{\theta}_m}R_{\mathrm{sum}}
=
\left(
\frac{\partial \mathbf f_m}
{\partial \boldsymbol{\theta}_m}
\right)^{T}
\nabla_{\mathbf f_m}R_{\mathrm{sum}},
\end{equation}
where the angular Jacobian is given in~\eqref{eq:su_boresight_jacobian}.
The feasible set of $\mathbf f_m$ is determined by the RA angular constraints.
Using $\mathbf f_m$ avoids repeated applications of the angular Jacobian in the multi-user sum-rate gradient.
Smoothing is applied only for gradient evaluation at the boundary of $G_{\mathrm e}$, while the exact model in~\eqref{eq:gain_pattern_modeling} is used for objective and feasibility evaluations.

\subsubsection{Sum-Rate Gradient Evaluation}

We first derive the gradient of $R_{\mathrm{sum}}$ with respect to
$\boldsymbol{\chi}\in\{\mathbf f_m,\boldsymbol{\psi}\}$.
For user $k$, let
$I_k\triangleq\sum_{j\neq k}P_j|\mathbf{w}_k^{H}\mathbf{h}_j|^{2}$
and
$Z_k\triangleq\sigma^{2}\|\mathbf{w}_k\|^{2}$.
Define the signal and interference weights as
\begin{equation}
\label{eq:weight_def}
\zeta_k^{\mathrm{sig}}
\triangleq
\frac{P_k\mathbf{w}_k^{H}\mathbf{h}_k}{I_k+Z_k},
\quad
\zeta_{k,j}^{\mathrm{int}}
\triangleq
\frac{\gamma_k P_j(\mathbf{w}_k^{H}\mathbf{h}_j)}{I_k+Z_k},
\quad j\neq k.
\end{equation}
Let
$\mathbf{D}_{j,\boldsymbol{\chi}}
\triangleq
\partial\mathbf{h}_{j}/\partial\boldsymbol{\chi}^{T}$
denote the channel-derivative matrix of $\mathbf{h}_{j}$ with respect to
$\boldsymbol{\chi}$.
By differentiating
$R_{\mathrm{sum}}=\sum_{k=1}^{K}\log_{2}(1+\gamma_k)$
and applying the quotient rule to~\eqref{eq:sinr}, the sum-rate gradient is given by
\begin{equation}
\label{eq:sumrate_grad_general}
\nabla_{\boldsymbol{\chi}}R_{\mathrm{sum}}
=
\frac{2}{\ln 2}
\sum_{k=1}^{K}
\frac{
	\Re\!\left\{
	\mathbf{D}_{k,\boldsymbol{\chi}}^{H}
	\mathbf{w}_{k}
	\zeta_k^{\mathrm{sig}}
	-
	\sum_{j\neq k}
	\mathbf{D}_{j,\boldsymbol{\chi}}^{H}
	\mathbf{w}_{k}
	\zeta_{k,j}^{\mathrm{int}}
	\right\}
}{
	1+\gamma_k
}.
\end{equation}
We next evaluate $\mathbf D_{j,\boldsymbol{\chi}}$ for the BS boresight update
$\boldsymbol{\chi}=\mathbf f_m$ and the IRS orientation update
$\boldsymbol{\chi}=\boldsymbol{\psi}$, respectively.

\subsubsection{RA Boresight Update}

For fixed $\boldsymbol{\psi}$, the RA boresight vectors
$\{\mathbf f_m\}_{m=1}^{M}$ are updated sequentially.
For antenna~$m$, the gradient is obtained by setting
$\boldsymbol{\chi}=\mathbf f_m$ in~\eqref{eq:sumrate_grad_general}.
From~\eqref{eq:hkB_total} and~\eqref{eq:HRB_entry},
$\mathbf f_m$ enters the composite channel in~\eqref{eq:composite_channel}
only through $[\mathbf g_{\mathrm B,k,\ell}]_m$ and $G_{m,n}^{\mathrm{BS}}$.
Hence, $\partial\mathbf h_k/\partial\mathbf f_m^{T}
\in\mathbb C^{M\times 3}$ has only the $m$-th row nonzero, and the
nonzero row is given by~\eqref{eq:mu_Dk_fm_row}, at the top of the next page,
\begin{figure*}[!t]
\vspace*{1pt}
\begin{equation}
	\label{eq:mu_Dk_fm_row}
	\begin{aligned}
		\frac{\partial[\mathbf h_k]_m}{\partial\mathbf f_m^{T}}
		={}&
		\frac{\partial[\mathbf h_{\mathrm B,k}]_m}
		{\partial\mathbf f_m^{T}}
		+
		\sum_{n=1}^{N}
		v_n[\mathbf h_{\mathrm R,k}]_n
		\frac{\partial[\mathbf H_{\mathrm{RB}}]_{m,n}}
		{\partial\mathbf f_m^{T}} \\
		={}&
		\sum_{\ell=1}^{L_{\mathrm{B},k}}
		c_{\mathrm{B},k,\ell}\,
		[\mathbf{a}_{\mathrm{B},k,\ell}]_m\,
		\frac{\partial\sqrt{[\mathbf{g}_{\mathrm{B},k,\ell}]_m}}
		{\partial\mathbf f_m^{T}} 
		+
		\sum_{n=1}^{N}
		\frac{\lambda\, v_n[\mathbf{h}_{\mathrm{R},k}]_n
			\sqrt{G^{\mathrm{ref}}_{n,m}(\boldsymbol{\psi})}}
		{4\pi\, d_{n,m}(\boldsymbol{\psi})}\,
		e^{-jk_c d_{n,m}(\boldsymbol{\psi})}\,
		\frac{\partial\sqrt{G^{\mathrm{BS}}_{m,n}}}
		{\partial\mathbf f_m^{T}} .
	\end{aligned}
\end{equation}
\hrulefill
\vspace*{1pt}
\end{figure*}
where $\partial\sqrt{[\mathbf g_{\mathrm B,k,\ell}]_m}/
\partial\mathbf f_m^{T}$ and $\partial\sqrt{G^{\mathrm{BS}}_{m,n}}/
\partial\mathbf f_m^{T}$ are obtained by removing the angular
Jacobian $\partial\mathbf f_m/\partial\boldsymbol{\theta}_m$ in
\eqref{eq:su_RA_gain_derivatives} and \eqref{eq:su_RA_gain_derivatives2}.
Consistent with~\eqref{eq:gain_pattern_modeling}, the above derivatives
are evaluated only when the corresponding directional cosines are positive.
For directions outside the front hemisphere, both the gain and its
derivative are set to zero.
Substituting~\eqref{eq:mu_Dk_fm_row} into
\eqref{eq:sumrate_grad_general} yields $\nabla_{\mathbf f_m}R_{\mathrm{sum}}$.
Since the multi-user update is performed with respect to the boresight
vector $\mathbf f_m$, the elevation constraint in~\eqref{eq:P1_const_elev}
is equivalently written as
\begin{equation}
\label{eq:RA_spherical_cap}
\mathcal F_m
\triangleq
\left\{
\mathbf f\in\mathbb R^{3}
\mid
\|\mathbf f\|=1,\;
\mathbf f^{T}\mathbf f_{\mathrm{ref}}
\geq
\cos\theta_{\max}
\right\}, 
\end{equation}
where $\mathbf f_{\mathrm{ref}}=[0,1,0]^{T}$.
For a trial step size $s_m>0$, the projected-gradient candidate is
\begin{equation}
\label{eq:boresight_vector_update}
\mathbf f_m^{\mathrm{cand}}(s_m)
=
\Pi_{\mathcal F_m}
\!\left(
\mathbf f_m+s_m\nabla_{\mathbf f_m}R_{\mathrm{sum}}
\right),
\end{equation}
where $\Pi_{\mathcal F_m}(\cdot)$ denotes the projection onto
$\mathcal F_m$.
Starting from $s_m=s_{\max}$, backtracking with factor
$\delta_{\mathrm{back}}\in(0,1)$ is performed until
$\mathbf f_m^{\mathrm{cand}}(s_m)$ yields a non-decreasing true sum rate
with the other variables fixed.
If no acceptable step is found before $s_m<s_{\min}$, the current
$\mathbf f_m$ is retained.
After the RA boresight update, the deflection angles
$\{\boldsymbol{\theta}_m\}_{m=1}^{M}$ are recovered from
$\{\mathbf f_m\}_{m=1}^{M}$ via the inverse mapping
of~\eqref{eq:boresight_vector}.

\subsubsection{IRS Orientation Update}

Given the updated RA boresights $\boldsymbol{\Theta}$, the IRS
orientation $\boldsymbol{\psi}$ is updated with fixed
$\mathbf W$ and $\mathbf v$.
The ascent direction is obtained by setting
$\boldsymbol{\chi}=\boldsymbol{\psi}$ in
\eqref{eq:sumrate_grad_general}.
For each Euler angle $\psi_i$, the channel derivative
$\partial\mathbf h_k/\partial\psi_i$ follows from the two-term
decomposition in~\eqref{eq:su_channel_derivative_psi} after restoring
the user index $k$.
The corresponding constituent derivatives are given by
\eqref{eq:su_hR_derivative}, \eqref{eq:su_HRB_derivative},
\eqref{eq:su_Gref_GBS_derivative}, \eqref{eq:su_Gref_GBS_derivative2} and~\eqref{eq:su_d_derivative}.
Stacking the derivatives for $i=1,2,3$ and substituting them into
\eqref{eq:sumrate_grad_general} yields
\begin{equation}
\label{eq:mu_psi_gradient}
\mathbf g^{(t)}
\triangleq
\nabla_{\boldsymbol{\psi}}R_{\mathrm{sum}}
\bigl(\boldsymbol{\psi}^{(t-1)}\bigr)
\in\mathbb R^{3},
\end{equation}
where $\boldsymbol{\psi}^{(t-1)}$  denotes the IRS orientation at iteration $t-1$.
For the multi-user sum-rate update, the visibility constraint is
linearized before line search, and the trial point is projected onto the
intersection of the Euler-angle box and the linearized visibility
constraint by Dykstra's method \cite{bauschke1994dykstra}.
The Euler-angle box and the locally linearized visibility constraint are
respectively given by
\begin{equation}
\label{eq:mu_psi_constraint_sets}
\begin{aligned}
	\!\!\!\!	\mathcal B_{\psi}
	&\!=\!
	\left\{
	\boldsymbol{\psi}
	\mid
	-\boldsymbol{\psi}_{\max}
	\preceq
	\boldsymbol{\psi}
	\preceq
	\boldsymbol{\psi}_{\max}
	\right\},\\
	\!\!\!\!	\mathcal L_{\psi}^{(t)}
	&\!=\!
	\left\{
	\boldsymbol{\psi}
	\mid
	g_{\mathrm{vis}}(\boldsymbol{\psi}^{(t-1)})
	\!+\!
	\bigl(\mathbf{a}_{\mathrm{vis}}^{(t)}\bigr)^{T}
	\!\!\!\!\left(
	\boldsymbol{\psi}-\boldsymbol{\psi}^{(t-1)}
	\right)
	\geq0
	\right\},
\end{aligned}
\end{equation}
where $\mathbf{a}_{\mathrm{vis}}^{(t)}\triangleq\nabla_{\boldsymbol{\psi}}g_{\mathrm{vis}}(\boldsymbol{\psi}^{(t-1)})\in\mathbb{R}^{3}$ denotes the constraint gradient at $\boldsymbol{\psi}^{(t-1)}$.
Let
$\mathcal C_{\psi}^{(t)}
=
\mathcal B_{\psi}\cap\mathcal L_{\psi}^{(t)}$, which is used as the projection set for the
IRS-orientation candidate.
Different from the single-user IRS-orientation update, a safeguarded BB \cite{barzilai1988bb}
stepsize is adopted here to account for the stronger local variation of
the multi-user sum-rate objective caused by inter-user interference.
Denote the resulting stepsize by $\tau^{(t)}$.
The projected candidate is then given by
\begin{equation}
\label{eq:mu_psi_projected_update}
\boldsymbol{\psi}^{\mathrm{cand}}
=
\Pi_{\mathcal C_{\psi}^{(t)}}
\left(
\boldsymbol{\psi}^{(t-1)}
+
\tau^{(t)}\mathbf g^{(t)}
\right),
\end{equation}
where $\Pi_{\mathcal C_{\psi}^{(t)}}(\cdot)$ is implemented by Dykstra's projection.
Since $\mathcal L_{\psi}^{(t)}$ is a local approximation of the visibility constraint, the candidate is further checked using the original constraint $g_{\mathrm{vis}}(\boldsymbol{\psi})\geq0$. The candidate is accepted only if the original visibility constraint is satisfied and the true sum rate is non-decreasing with $\mathbf W$, $\mathbf v$, and $\boldsymbol{\Theta}$ fixed. Otherwise, $\tau^{(t)}$ is reduced and the projection is repeated. If no acceptable candidate is found before $\tau^{(t)}<\tau_{\min}$, $\boldsymbol{\psi}^{(t-1)}$ is retained.
The overall procedure is summarized in Algorithm~\ref{alg:rotation_opt}.

\begin{algorithm}[!tbp]
\caption{Rotation Update for Fixed $\mathbf W$ and $\mathbf v$}
\label{alg:rotation_opt}
\begin{algorithmic}[1]
	\State \textbf{Input:} Fixed $\mathbf W$ and $\mathbf v$; initial rotations
	$\boldsymbol{\Theta}^{(0)}$ and $\boldsymbol{\psi}^{(0)}$; stopping
	parameters $\{\epsilon_{\mathrm{rot}},\epsilon_0,T_{\mathrm{rot}}\}$;
	RA step parameters $\{s_{\max},s_{\min}\}$; IRS BB parameters
	$\{\tau_{\min},\tau_{\mathrm{ini}},\tau_{\max}\}$; backtracking factor
	$\delta_{\mathrm{back}}$.
	\State Initialize $\mathbf f_m\gets\mathbf f_m^{(0)}$,
	$\boldsymbol{\psi}\gets\boldsymbol{\psi}^{(0)}$,
	and $t\gets1$.
	\Repeat
	\State $R^{\mathrm{prev}}\gets
	R_{\mathrm{sum}}(\{\mathbf f_m\}_{m=1}^{M},\boldsymbol{\psi})$.
	\For{$m=1,\ldots,M$}
	\State Compute $\nabla_{\mathbf f_m}R_{\mathrm{sum}}$ by
	\eqref{eq:sumrate_grad_general}.
	\State Update $\mathbf f_m$ by projected backtracking using
	\eqref{eq:boresight_vector_update}.
	\EndFor
	\State Construct $\mathbf g^{(t)}$ and $\mathcal C_{\psi}^{(t)}$ by~\eqref{eq:mu_psi_gradient} and
	\eqref{eq:mu_psi_constraint_sets}, respectively.
	\State Choose $\tau^{(t)}$ by the safeguarded BB rule.
	\State Update $\boldsymbol{\psi}$ by~\eqref{eq:mu_psi_projected_update}.
	\State Recover $\{\boldsymbol{\theta}_m\}_{m=1}^{M}$ from
	$\{\mathbf f_m\}_{m=1}^{M}$ via~\eqref{eq:boresight_vector}.
	\State $R\gets
	R_{\mathrm{sum}}(\{\mathbf f_m\}_{m=1}^{M},\boldsymbol{\psi})$,
	and $t\gets t+1$.
	\Until{$|R-R^{\mathrm{prev}}|/
		\max\{R^{\mathrm{prev}},\epsilon_0\}\leq\epsilon_{\mathrm{rot}}$
		or $t\geq T_{\mathrm{rot}}$}
	\State \textbf{return}
	$\boldsymbol{\Theta}^{\star}\gets
	\{\boldsymbol{\theta}_m\}_{m=1}^{M}$ and
	$\boldsymbol{\psi}^{\star}\gets\boldsymbol{\psi}$.
\end{algorithmic}
\end{algorithm}

\subsection{Overall AO Algorithm and Complexity Analysis}
\label{subsec:mu_overall}
\begin{algorithm}[!tbp]
\caption{Proposed AO Algorithm for Problem~(P1)}
\label{alg:mu_ao}
\begin{algorithmic}[1]
	\State \textbf{Input:} Initial $\mathbf{v}^{(0)}$, $\boldsymbol{\Theta}^{(0)}$, and $\boldsymbol{\psi}^{(0)}$, tolerance $\epsilon_{\mathrm{AO}}$, and maximum iteration number $T_{\mathrm{AO}}$.
	\State \textbf{Initialization:} Set $t\gets0$.
	\State Build $\{\mathbf{h}_k^{(0)}\}_{k=1}^{K}$ from $\boldsymbol{\Theta}^{(0)}$, $\boldsymbol{\psi}^{(0)}$, and $\mathbf{v}^{(0)}$.
	\State Update $\mathbf{W}^{(0)}$ via~\eqref{eq:mmse_combiner}.
	\State Compute $R_{\mathrm{sum}}^{(0)}\gets\sum_{k=1}^{K}\log_2(1+\gamma_k^{(0)})$.
	\Repeat
	\State Update $\mathbf{v}^{(t+1)}$ by solving~\eqref{eq:mu_qcqp} via~\eqref{eq:mu_Q_matrix}--\eqref{eq:FP_Euclidean_grad}.
	\State Update $\boldsymbol{\Theta}^{(t+1)}$ and $\boldsymbol{\psi}^{(t+1)}$ by Algorithm~\ref{alg:rotation_opt}.
	\State Update $\mathbf{W}^{(t+1)}$ via~\eqref{eq:mmse_combiner}.
	\State Compute $R_{\mathrm{sum}}^{(t+1)}\gets \sum_{k=1}^{K}\log_2(1+\gamma_k^{(t+1)})$.
	\State Set $t\gets t+1$;
	\Until{$|R_{\mathrm{sum}}^{(t)}-R_{\mathrm{sum}}^{(t-1)}|/R_{\mathrm{sum}}^{(t-1)}\leq \epsilon_{\mathrm{AO}}$ or $t\geq T_{\mathrm{AO}}$}
	\State \textbf{return} $\mathbf{W}^{(t)}$, $\mathbf{v}^{(t)}$, $\boldsymbol{\Theta}^{(t)}$, and $\boldsymbol{\psi}^{(t)}$.
\end{algorithmic}
\end{algorithm}

The three-block AO procedure is summarized in Algorithm~\ref{alg:mu_ao}, where $\epsilon_{\mathrm{AO}}$ and $T_{\mathrm{AO}}$ denote the stopping tolerance and the maximum number of AO iterations, respectively.
By iteratively optimizing each variable, the accepted sum-rate values are monotonically non-decreasing over iterations.
Since the transmit powers are finite, the objective value of problem~(P1) is upper-bounded.
Therefore, the objective value generated by Algorithm~\ref{alg:mu_ao} converges.

The per-AO-iteration complexity is dominated by the IRS phase-shift update and the rotation update.
The MMSE combiner in~\eqref{eq:mmse_combiner} requires $\mathcal{O}(KM^3+K^2M^2)$ operations, where the cost comes from the covariance-matrix inversions and composite-channel inner products.
The FP-RCG phase-shift update requires $\mathcal{O}(T_{\mathrm{FP}}(K^2MN+K^2N^2+T_{\mathrm{RCG}}N^2))$ operations.
The $K^2MN$ and $K^2N^2$ terms come from the construction of $\mathbf{Q}^{(i)}$ and $\mathbf{q}^{(i)}$ in~\eqref{eq:mu_Q_matrix}-\eqref{eq:mu_q_vector}, while the $T_{\mathrm{RCG}}N^2$ term comes from matrix-vector products with $\mathbf{Q}^{(i)}$ during the RCG iterations.
The rotation update requires $\mathcal{O}(T_{\mathrm{rot}}KMN)$ operations, where the cost comes from the gradient evaluation of the BS boresights and IRS orientation.
Therefore, the complexity over $T_{\mathrm{AO}}$ iterations is $\mathcal{O}(T_{\mathrm{AO}}(KM^3+K^2M^2+T_{\mathrm{FP}}(K^2MN+K^2N^2+T_{\mathrm{RCG}}N^2)+T_{\mathrm{rot}}KMN))$.

\section{Simulation Results}
\label{sec:results}

In this section, numerical simulations are conducted to evaluate
the performance of the proposed algorithm. Let $k_c=2\pi/\lambda$ denote the wavenumber.
Unless otherwise specified, we set $f_c=6$~GHz, $\lambda=0.05$~m, $\mathbf{b}_0=(0,0,10)^T$~m, $\mathbf{r}_0=(2,4,14)^T$~m, $K=4$, $P_k=20$~dBm, $\sigma^2=-80$~dBm, $L_{\mathrm{B},k}=L_{\mathrm{IR},k}=4$, $\kappa_{\mathrm{NLoS}}=0.3$, $p_{\mathrm{BS}}=6$, $p_{\mathrm{IRS}}=5$, and $\theta_{\max}=\alpha_{\max}=\beta_{\max}=\phi_{\max}=60^\circ$.
The BS array is a $4\times4$ UPA with $M=16$ and $d_{\mathrm{B}}=\lambda/2$. 
The IRS adopts a fixed square aperture with side length $0.5$~m, element spacing $d_{\mathrm{IRS}}=\lambda/2$, and $N_y=N_z=21$, resulting in $N=441$ reflecting elements.
User azimuth angles are drawn uniformly from $[-60^\circ,60^\circ]$.
The user-distance range is chosen such that the user-IRS channel remains in the far field.
The tested IRS-BS distances are chosen in the spherical-wave near-field regime considered by the IRS-BS channel model.
The direct user-BS channel is retained in all simulations and is attenuated by $25$~dB in the default setting to emulate partial blockage of the direct link by buildings. We compare the following schemes:
\begin{itemize}
\item \textit{Dual Rotation}: $\{\boldsymbol{\Theta}, \boldsymbol{\psi}, \mathbf{v}, \mathbf{W}\}$ are jointly optimized.
\item \textit{BS-only}: $\{\boldsymbol{\Theta}, \mathbf{v}, \mathbf{W}\}$ are optimized, while the IRS orientation is fixed at the BS-facing reference orientation.
\item \textit{IRS-only}: $\{\boldsymbol{\psi}, \mathbf{v}, \mathbf{W}\}$ are optimized, while all RA boresights are fixed at the reference direction $\mathbf{f}_{\mathrm{ref}}=[0,1,0]^T$.
\item \textit{Fixed}: $\{\mathbf{v}, \mathbf{W}\}$ are optimized, while the IRS orientation is fixed at the BS-facing reference orientation and all RA boresights are fixed at $\mathbf{f}_{\mathrm{ref}}$.
\end{itemize}

\subsection{Convergence Behavior and Performance Scaling}
\label{subsec:convergence}

\begin{figure}[t]
\centering
\includegraphics[width=0.87 \linewidth]{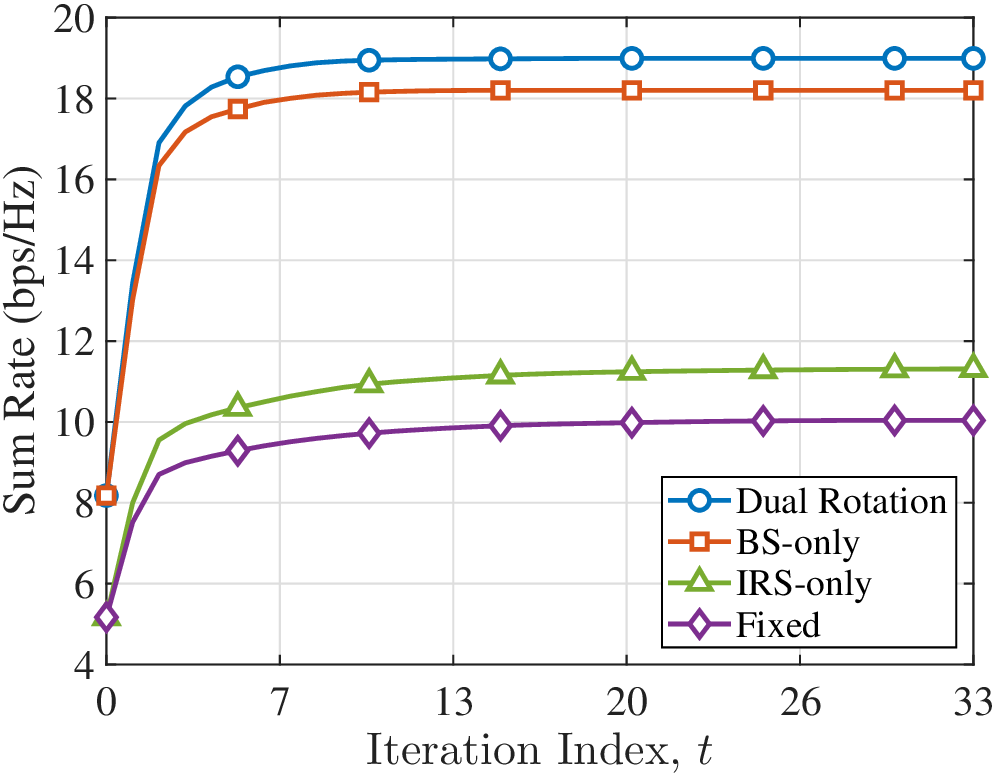}
\caption{Convergence behavior of the proposed AO algorithm.}
\label{fig:convergence}
\end{figure}
Fig.~\ref{fig:convergence} shows the convergence behavior of the AO-based algorithms under different schemes.
It is observed that the sum-rate of all schemes are non-decreasing and converge within about $20$ iterations, which verifies the monotonic convergence behavior of the proposed update procedure.
This is because each block of variables is updated alternately and the new solution is accepted only when the objective value is non-decreasing.
The converged sum rates of \textit{Dual Rotation}, \textit{BS-only}, \textit{IRS-only}, and \textit{Fixed} are about $18.99$, $18.20$, $11.31$, and $10.04$~bps/Hz, respectively.
The proposed \textit{Dual Rotation} scheme achieves the highest converged sum rate, which demonstrates the benefit of jointly optimizing the BS antenna boresights and IRS panel orientation.

\begin{figure}[t]
\centering
\includegraphics[width=0.87 \linewidth]{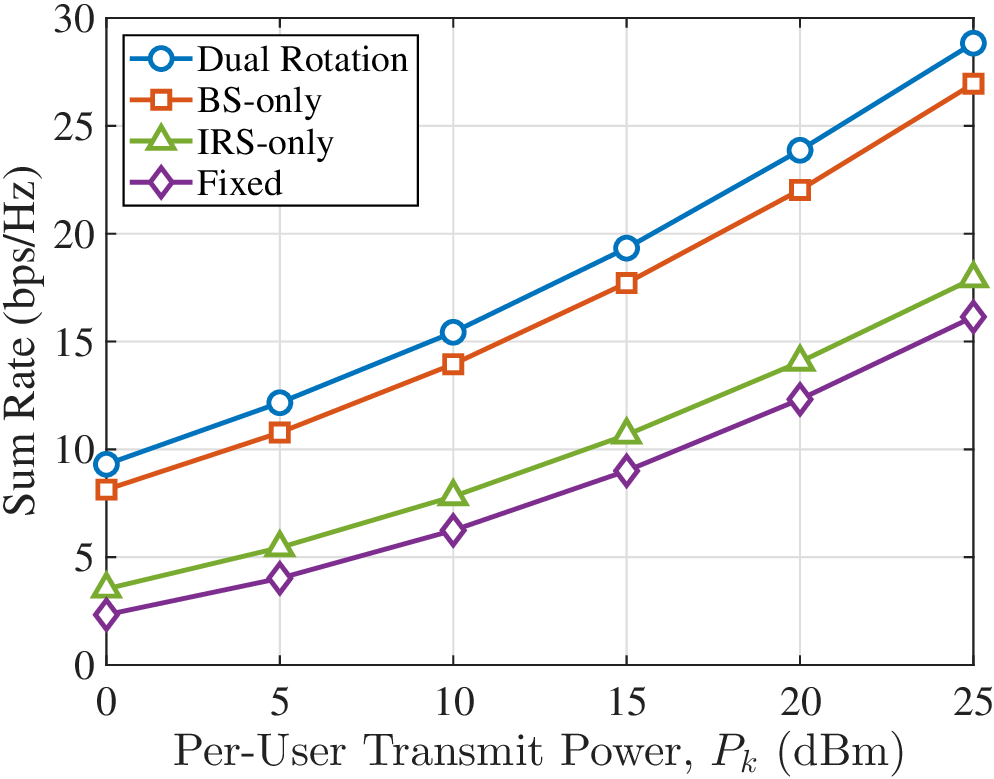}
\caption{Sum rate versus per-user transmit power.}
\label{fig:power}
\end{figure}
Fig.~\ref{fig:power} shows the sum rate versus the per-user transmit power $P_k$ under different schemes.
It can be observed that the sum rate of all schemes increases with $P_k$, and the proposed \textit{Dual Rotation} scheme achieves the highest sum rate over the whole transmit-power range.
For example, when $P_k=0$~dBm, the sum rates achieved by \textit{Dual Rotation}, \textit{BS-only}, \textit{IRS-only}, and \textit{Fixed} are about $9.31$, $8.13$, $3.53$, and $2.33$~bps/Hz, respectively.
When $P_k=25$~dBm, the corresponding sum rates increase to about $28.84$, $26.95$, $17.92$, and $16.14$~bps/Hz.
This is because increasing the transmit power improves both the direct and IRS-reflected signal components.
Moreover, the \textit{BS-only} and \textit{IRS-only} schemes outperform the \textit{Fixed} benchmark, which demonstrates the benefits of BS boresight steering and IRS orientation control, respectively.
The proposed \textit{Dual Rotation} scheme further outperforms the two single-rotation schemes, indicating the benefit of jointly optimizing BS antenna boresights and IRS panel orientation.

\begin{figure}[t]
\centering
\includegraphics[width= 0.87 \linewidth]{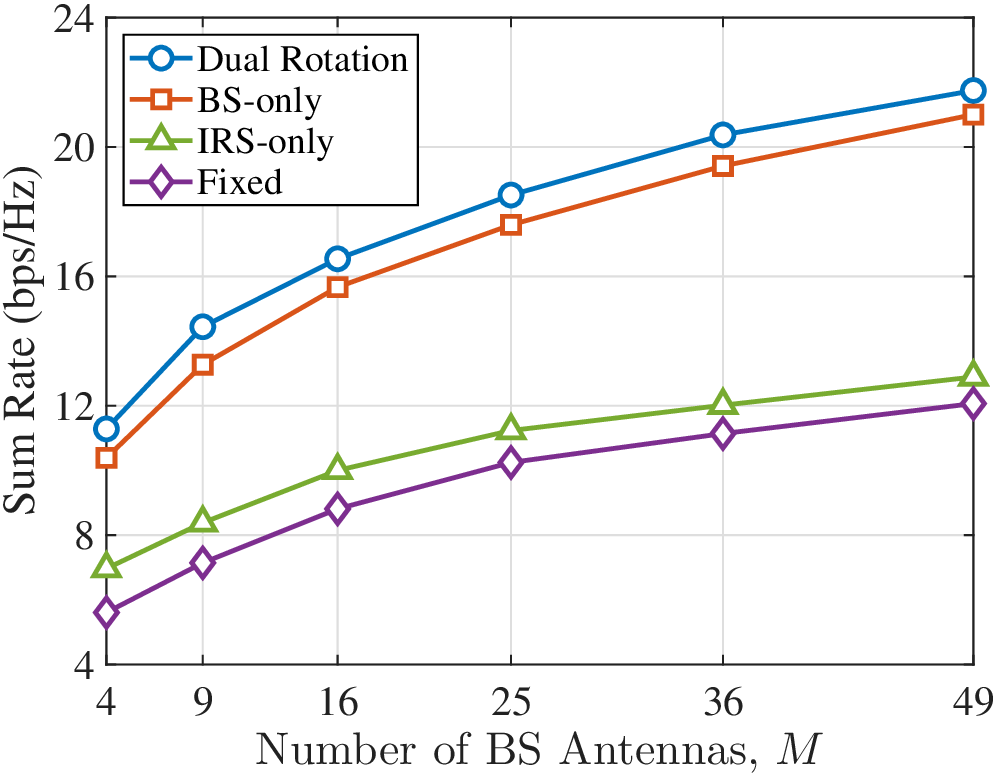}
\caption{Sum rate versus the number of BS antennas $M$.}
\label{fig:antennas}
\end{figure}
Fig.~\ref{fig:antennas} shows the sum rate versus the number of BS antennas, where $M\in\{4,9,16,25,36,49\}$, $K=4$, $P_k=20$~dBm, and $N=441$.
It can be observed that the sum rates of all schemes increase with $M$.
For example, when $M=4$, the sum rates achieved by \textit{Dual Rotation}, \textit{BS-only}, \textit{IRS-only}, and \textit{Fixed} are about $11.28$, $10.39$, $6.97$, and $5.61$~bps/Hz, respectively.
When $M$ increases to $49$, the corresponding sum rates are about $21.75$, $21.00$, $12.89$, and $12.07$~bps/Hz.
This is because a larger BS array provides more receive dimensions for separating multi-user signals and collecting the IRS-reflected components.
Moreover, the performance gaps between \textit{Dual Rotation} and the \textit{IRS-only}/\textit{Fixed} benchmarks become larger in absolute value as $M$ increases, while \textit{Dual Rotation} still provides additional gain over the strong \textit{BS-only} benchmark.
This indicates that the proposed joint rotation design can better exploit the enlarged BS array.

\begin{figure}[t]
\centering
\includegraphics[width=  0.87 \linewidth]{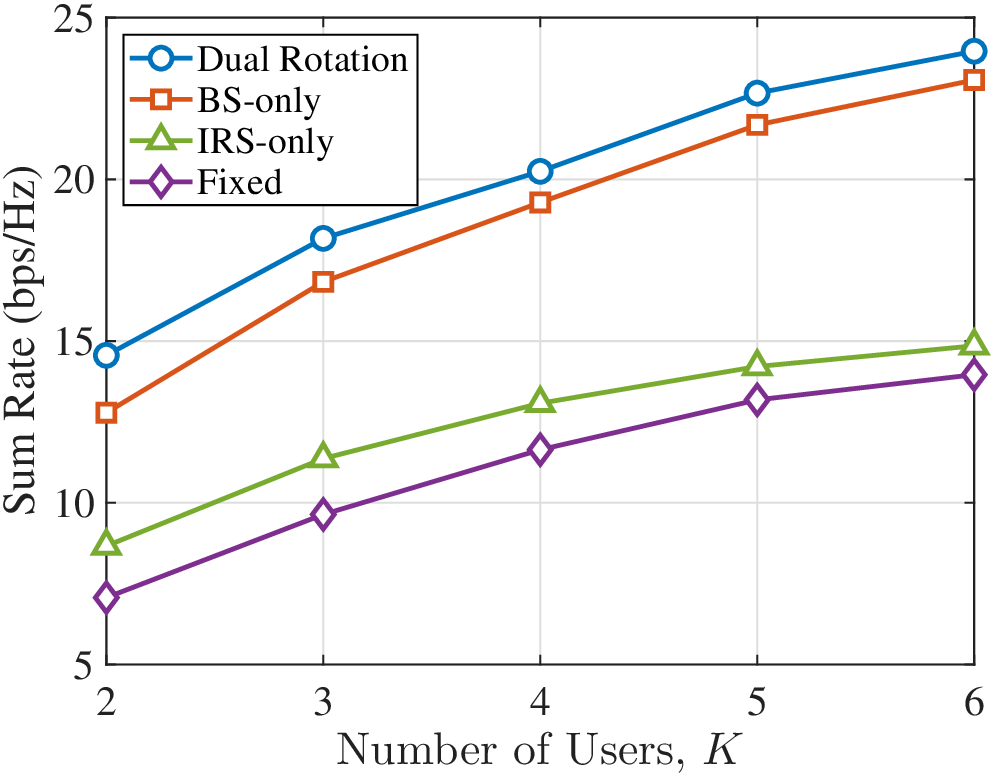}
\caption{Sum rate versus the number of users. }
\label{fig:users}
\end{figure}
Fig.~\ref{fig:users} shows the sum rate versus the number of users $K$, where $K\in\{2,\ldots,6\}$, $M=16$, $P_k=20$~dBm, and $N=441$.
It can be observed that the sum rates of all schemes increase with $K$.
For example, when $K=2$, the sum rates achieved by \textit{Dual Rotation}, \textit{BS-only}, \textit{IRS-only}, and \textit{Fixed} are about $14.56$, $12.78$, $8.67$, and $7.07$~bps/Hz, respectively.
When $K$ increases to $6$, the corresponding sum rates become about $23.96$, $23.07$, $14.85$, and $13.96$~bps/Hz.
This is because each user transmits with fixed power, and increasing $K$ introduces more data streams and a higher total uplink transmit power.
Meanwhile, the sum-rate growth tends to slow down when $K$ becomes large, since more users also introduce stronger inter-user interference under the fixed BS array size.
The proposed \textit{Dual Rotation} scheme consistently outperforms the benchmark schemes for all considered values of $K$, which shows its robustness under different user loading levels.

\subsection{Near-Field Rotation Coupling Validation}

\begin{figure}[t]
\centering
\includegraphics[width=0.9  \linewidth]{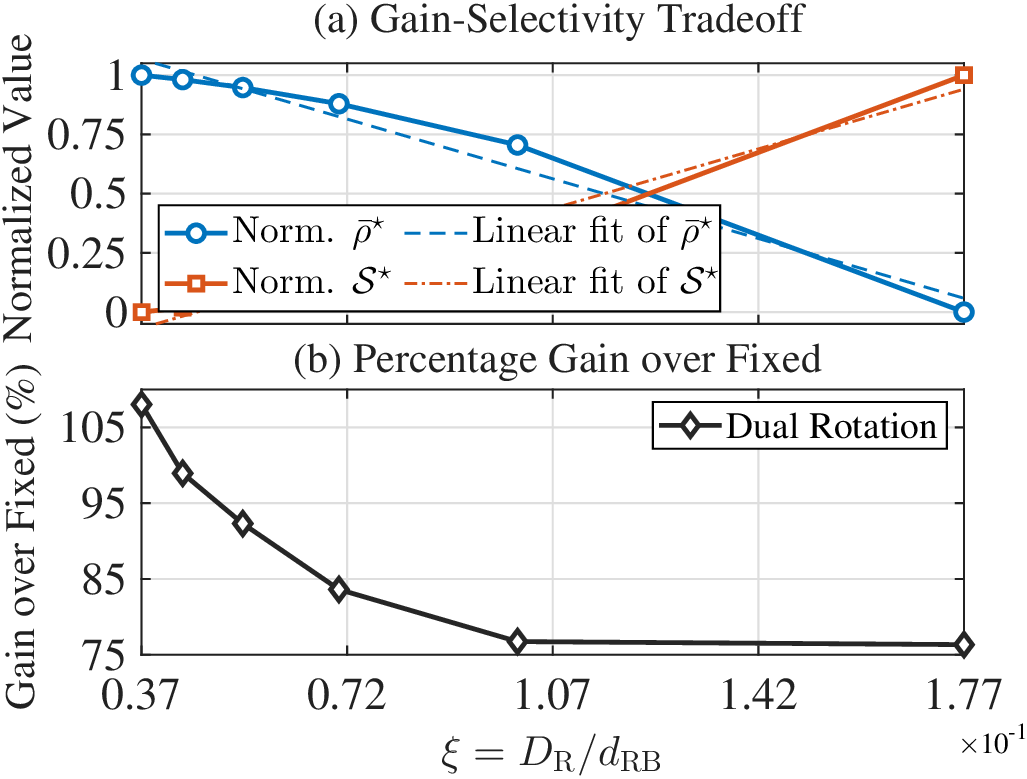}
\caption{Near-field rotation-coupling metrics versus $\xi$.}
\label{fig:nearfield_tradeoff}
\end{figure}

Fig.~\ref{fig:nearfield_tradeoff} shows the near-field rotation-coupling metrics versus the normalized IRS aperture-to-distance ratio $\xi$.
In Fig.~\ref{fig:nearfield_tradeoff}(a), $\bar{\rho}^{\star}$ and $\mathcal{S}^{\star}$ are normalized to $[0,1]$ for ease of comparison.
It is observed that the normalized $\bar{\rho}^{\star}$ decreases as $\xi$ increases, whereas the normalized $\mathcal{S}^{\star}$ increases as $\xi$ increases.
This indicates that a larger IRS aperture-to-distance ratio leads to stronger spatial variation in the IRS-BS propagation directions.
As a result, maintaining a high average alignment over the IRS-BS link becomes more difficult, while the reflection-gain variation becomes more pronounced.
This observation is consistent with Proposition~\ref{prop:xi_tradeoff}.
Fig.~\ref{fig:nearfield_tradeoff}(b) reports the percentage gain of \textit{Dual Rotation} with respect to the \textit{Fixed} baseline.
It is observed that the gain is positive over the considered range of $\xi$.
The gain curve generally decreases as $\xi$ increases, indicating that the relative gain of \textit{Dual Rotation} over the fixed-orientation baseline becomes less pronounced for larger IRS aperture-to-distance ratios.
This result shows that coordinated rotation remains effective under the considered near-field IRS-BS geometries.

\begin{figure}[t]
\centering
\includegraphics[width=0.95    \linewidth]{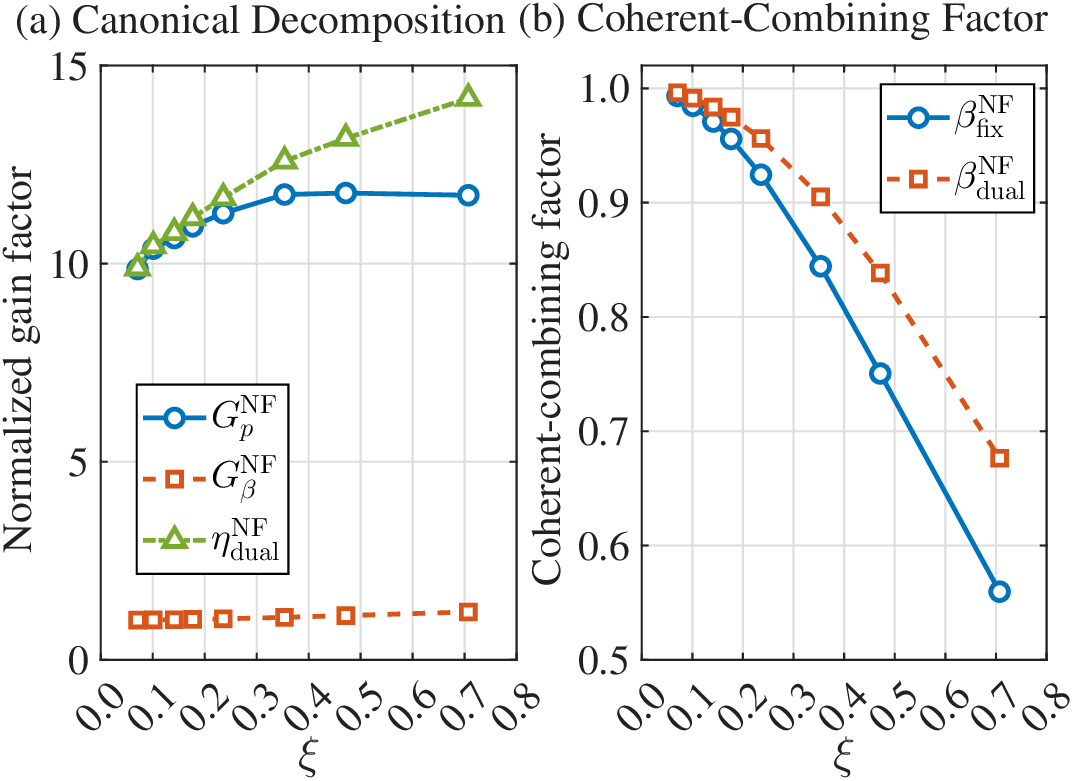}
\caption{Reflected-channel decomposition versus $\xi$.}
\label{fig:single_user_nf_decomposition}
\end{figure}
Fig.~\ref{fig:single_user_nf_decomposition} illustrates the reflected-channel decomposition in Proposition~\ref{prop:nf_dual_coupling} under the cascaded-LoS setting without the direct link or NLoS components. 
Fig.~\ref{fig:single_user_nf_decomposition}(a) plots $G_{p}^{\mathrm{NF}}$, $G_{\beta}^{\mathrm{NF}}$, and $\eta_{\mathrm{dual}}^{\mathrm{NF}}$ versus $\xi$. 
It is observed that $G_{p}^{\mathrm{NF}}$ is much larger than $G_{\beta}^{\mathrm{NF}}$ over the considered range, indicating that the near-field dual-rotation gain mainly comes from the increase in the average reflected-channel power. 
For example, when $\xi$ increases from $0.071$ to $0.707$, $G_{p}^{\mathrm{NF}}$ stays around $9.86$-$11.78$, whereas $G_{\beta}^{\mathrm{NF}}$ increases from about $1.003$ to $1.209$. 
As a result, $\eta_{\mathrm{dual}}^{\mathrm{NF}}$ increases from about $9.89$ to $14.17$. 
This shows that the coherent-combining gain is relatively small compared with the column-power gain, but becomes more visible when the near-field effect becomes stronger.
Fig.~\ref{fig:single_user_nf_decomposition}(b) further compares the coherent-combining factors $\beta_{\mathrm{fix}}^{\mathrm{NF}}$ and $\beta_{\mathrm{dual}}^{\mathrm{NF}}$. 
Both factors decrease with $\xi$, which indicates that stronger near-field propagation makes coherent combining across IRS-reflected components more difficult. 
However, $\beta_{\mathrm{dual}}^{\mathrm{NF}}$ is consistently larger than $\beta_{\mathrm{fix}}^{\mathrm{NF}}$. 
For example, at $\xi=0.707$, $\beta_{\mathrm{dual}}^{\mathrm{NF}}$ is about $0.675$, while $\beta_{\mathrm{fix}}^{\mathrm{NF}}$ is about $0.560$. 
This shows that coordinated dual rotation can partly compensate for the coherent-combining loss caused by the near-field IRS-BS channel.

\section{Conclusion}
\label{sec:conclusion}

This paper investigated coordinated BS boresight steering and IRS panel orientation for an IRS-assisted multi-user MIMO uplink system.
We formulated a sum-rate maximization problem that jointly optimizes the receive beamforming, IRS phase shifts, BS antenna boresights, and IRS panel orientation.
A single-user case was analyzed to characterize the dual-rotation gain under far-field and near-field IRS-BS propagation conditions.
For the general multi-user case, we developed an AO algorithm, where the receive beamforming was obtained in closed form, the IRS phase shifts were optimized via FP-assisted RCG, and the rotation variables were updated by projected gradient methods.
Numerical results confirmed that coordinated dual rotation consistently improves the uplink sum rate over fixed-orientation and single-rotation benchmarks.
The impact of the aperture-to-distance ratio on the alignment variation and reflection-gain distribution was further illustrated across different IRS-BS distances.
These findings suggest that IRS panel orientation is an important optimization variable for coordinated dual-rotation design.

\bibliographystyle{IEEEtran}
\bibliography{Multi_User_MIMO_with_Rotatable_Antennas_and_IR_Joint_Antenna_Boresight_and_IRS_Orientation_Design}

\end{document}